\newcommand{\ignore}[1]{}
\newcommand{\notinproc}[1]{}
\newcommand{\onlyinproc}[1]{#1}
\newtheorem{thm}{Theorem}[section]
\newtheorem{theorem}{Theorem}[section]
\newtheorem{lemma}[thm]{Lemma}
\newtheorem{corollary}[thm]{ Corollary}
\newcommand{\ADS}{\mathop{\rm ADS}}
\newcommand{\kth}{\text{k}^{\text{th}}}
\newcommand{\Inf}{\mathop{\rm Inf}}
\newcommand{\rN}{{\sc rR}}
\newcommand{\rD}{{\sc rR}}
\newcommand{\D}{{\sc D}}
\newcommand{\skim}{{\sc SKIM}}
\newcommand{\instance}[1]{\textsf{#1}}
\begin{document}

\title{Reverse Ranking by Graph Structure: \\ Model and Scalable
  Algorithms}

\date{\today}

\numberofauthors{2}
\author{
\alignauthor Eliav Buchnik\\
       \affaddr{School of Computer Science}\\
       \affaddr{Tel Aviv University, Israel}\\
       \email{eliavbuh@gmail.com}
\alignauthor Edith Cohen\\
       \affaddr{Google Research, CA, USA}\\
       \affaddr{Tel Aviv University, Israel}\\
       \email{edith@cohenwang.com}
}

\ignore{
\author{
Eliav Buchnik \thanks{School of Computer Science, Tel Aviv University,
  Israel {\tt eliavbuh@gmail.com, edith@cohenwang.com}}  \and
Edith Cohen$^*$
}}

\maketitle 

\begin{abstract}
{\small 
 Distances in a network capture relations between nodes 
 and are the basis of centrality, similarity, and influence measures.
Often, however, the relevance of a node $u$ to a node
$v$ is more precisely measured not by the magnitude of the distance, but
by the number of nodes that are closer to $v$ than $u$. That is,
by the {\em rank} of $u$ in an ordering of nodes by increasing distance
from $v$.  

We identify and address fundamental challenges in rank-based
graph mining.  We first consider single-source computation of reverse-ranks and
design a ``Dijkstra-like''  algorithm which computes
nodes in order of  increasing approximate reverse rank while only
traversing edges adjacent to returned nodes.
 We then define {\em reverse-rank influence}, which naturally extends reverse 
nearest neighbors influence [Korn and Muthukrishnan 2000] and builds 
on a well studied distance-based influence. We present 
near-linear algorithms for 
greedy approximate reverse-rank influence maximization. The design relies on 
our single-source algorithm.
 Our algorithms utilize near-linear 
preprocessing of the network to compute all-distance sketches.
As a contribution of independent interest, we present  a
novel algorithm for computing these sketches, which have many other
applications,  on multi-core
architectures. 

 We complement our algorithms by establishing the hardness of 
 computing {\em exact} reverse-ranks for a single source and {\em
   exact}
 reverse-rank influence. This implies that when
using near-linear algorithms,  the small relative errors we obtain are 
the best we can currently hope for.

 Finally, we conduct an experimental evaluation on graphs with tens of
 millions of edges, demonstrating both scalability and accuracy.
}
   \end{abstract}

\section{Introduction}
\SetKwData{rNrNN}{NN$_{\text{\rN}}$}
\SetKw{Or}{or}
\SetKw{And}{and}

Shortest-paths distances in a network
are a classic measure of the relation between nodes and
are the basis of
similarity \cite{Liben-Nowell_Kleinberg:CIKM2003,CDFGGW:COSN2013}, centrality
\cite{Bavelas:HumanOrg1948,Sabidussi:psychometrika1966,Freeman:sn1979,BlochJackson:2007,CoKa:jcss07,Opsahl:2010,ECohenADS:TKDE2015,binaryinfluence:CIKM2014},
and influence
\cite{Gomez-RodriguezBS:ICML2011,ACKP:KDD2013,DSGZ:nips2013,timedinfluence:2014}
measures.  
 Often, however,  the relation of a node $j$ to $i$ is more 
   correctly modeled not by the magnitude of the distance $d_{ji}$
   from $j$ to $i$, but by $i$'s 
 position $\pi_{ji}$ in an ordering of nodes according to increasing 
    distance  from $j$ 
\cite{CoverHartkNN:InfoTheory1967,hastietibshirani:book2001,Wsabie:IJCAI2011,GaoJOW:KDD2015}. 
A classic use of rank as an indicator of relevance in 
metric spaces is the 
$k$ nearest neighbors (kNN) classifier, which classifies points based 
on the $k$ closest labeled examples 
\cite{CoverHartkNN:InfoTheory1967,hastietibshirani:book2001}. In terms 
of popularity, kNN outweighs 
the respective  distance-based classifiers, which instead use all examples within a 
certain distance.

More formally, we view a node $j$ as  {\em ranking} other nodes according to 
their distance order from  $j$.   The {\em rank} $\pi_{ji}$ is the position
of $i$ in increasing order from $j$.\footnote{$\pi_{ji}$  is also
  termed the {\em Dijkstra rank} of $i$, since Dijkstra's 
algorithm from source $j$ processes nodes in increasing distance.}
Accordingly, from the perspective of node $j$, we can refer to
$\pi_{ij}$ as a {\em reverse rank}.

An advantage of using rank is that it provides a different signal than distance by
``factoring out''  the effects of uneven density.
This is illustrated in the toy social network in 
Figure~\ref{network:fig}: 
We expect node $A$ to be more important to node $C$ than it is to node
$B$, even though, $A$ is closer to $B$ than to $C$ ($d_{CA} >
d_{BA}$).
This is because $B$ has a dense neighborhood of closer node than $A$,
but $C$ has only two nodes closer to it than $A$.
In terms of ranks, we have 
$\pi_{CA}=3$ and $\pi_{BA} = 6$ and thus $\pi_{CA} < \pi_{BA}$, which
reflects this intuition.

The rank relation is asymmetric:
In the example network in Figure 
\ref{network:fig}, we have 
$\pi_{BA} = 6$, since there are 5 nodes closer to  
$B$ than  
$A$,  and  $\pi_{AB}= 1$,  since $B$ is the closest node to $A$.
Therefore, 
$\pi_{AB} \not= \pi_{BA}$ even though the distance is
symmetric ($d_{AB}=d_{BA}$).  
The asymmetry $\pi_{BA} > \pi_{AB}$ reflects our intuition
that the higher degree node ($B$) has more influence on its neighbor
($A$) than the reverse.

In particular, with
tie breaking on distances, a node $v$ has exactly one nearest neighbor, but can have $0$ to many
{\em reverse nearest neighbors}, which 
are nodes $u$ which satisfy $\pi_{uv}=1$.  In our example, node $A$ has
no reverse nearest neighbors.  The number of reverse nearest neighbors of a point $v$
is a well studied notion of $v$'s influence, 
proposed by Korn and Muthukrishnan \cite{KornMuthu:sigmod2000}, and considered in 
metric spaces and in graphs \cite{YPMT:TKDE2006}.

\begin{figure}
\centering
\includegraphics[width=0.35\textwidth]{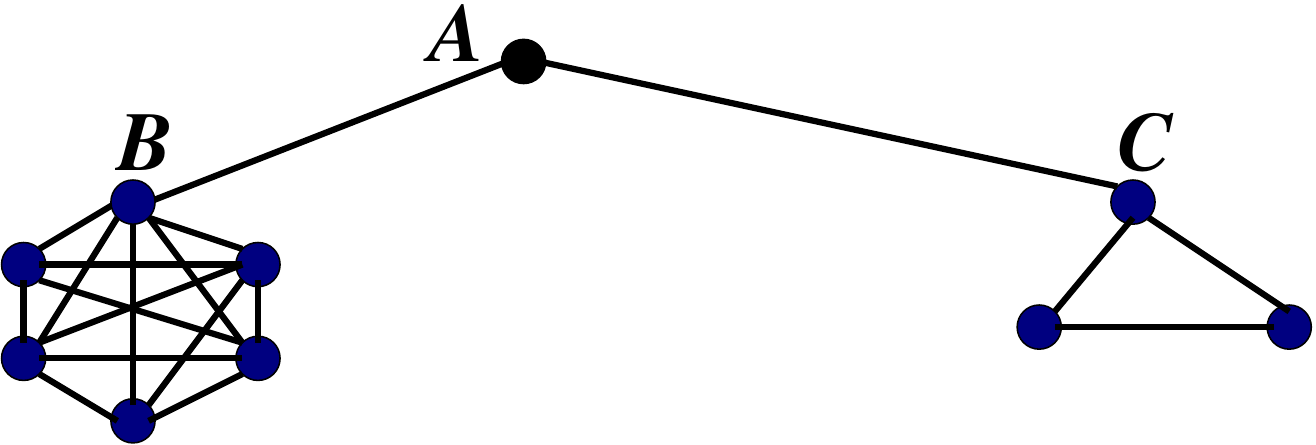}
\caption{\label{network:fig}{Example undirected social network (edge
    lengths are proportional to drawn lengths).}}
\end{figure}


In the basic model, which is sometimes called {\em monochromatic}  \cite{KornMuthu:sigmod2000},
all nodes both rank and get ranked.
 A natural extension ({\em bichromatic} model \cite{KornMuthu:sigmod2000}) allows only a subset of the 
 nodes to get ranked ({\em rankees}) and also permits  nodes that provide 
 ranks ({\em rankers}) to have different weights. In this model,
$\pi_{ij}$ relates 
a ranker $i$ to rankee $j$.
This distinction is useful when nodes have two or 
more types of entities, for example, users (rankers) and content (rankees).
 It also allows us to specify a special small
set of certified rankees such that we can characterize
properties of other nodes by this smaller set of ranks.
Importance weights $\beta(i) \geq 0$ assigned to rankers can
correspond to properties like
 purchase power or trust level. The ranks assigned by ranker $i$ are
 then weighted by $\beta(i)$.
This weighting is useful when we aggregate the scores of multiple
rankers to obtain centrality/influence scores of rankees.

\subsection{Contributions and Overview}
Rank-based measures provide a natural alternative to distance-based
ones, but algorithmically pose different challenges.
We identify and motivate fundamental challenges 
and present scalable algorithmic tools which
facilitating rank-based graph
mining.

\subsection*{Reverse-rank Single-source computation}
An important tool in working with distances is
an efficient single-source computation.
Dijkstra's algorithm from a source $s$ computes the 
  distances $d_{si}$ for all nodes $i$ in near-linear time.  A 
  powerful property of Dijkstra's algorithm is  {\em sorted access}:
Nodes are revealed in order of increasing distance from $s$.  Thus,  for 
  any $k$, the $k$ closest nodes to $s$ are computed while traversing 
  only edges adjacent to these $k$ nodes.   Therefore, if we are only interested
  in a prefix of the closest nodes, we can terminate the execution
  after they are computed, performing a fraction of the computation of
  a full execution.  When we work with ranks,  we
would be instead interested in a prefix of the highest ranks. Such
sorted access to rankings is also important for efficient
aggregating rankings \cite{FLN:jcss2003}.

 The  {\em reverse-rank single-source problem} is,
for a node $i\in U$,  to compute the reverse ranks $\pi_{ji}$ with 
respect to all nodes 
$j$.  Moreover, we aim for an efficient  algorithm that provides
sorted-access: Listing nodes in  
increasing $\pi_{ji}$ order with an algorithm that only traverses edges adjacent to 
listed nodes.

A naive solution for exact reverse-rank single source computation from $i$ is 
to run Dijkstra's algorithm from each node 
   $j$, until node $i$ is processed.  For the average node, this is 
   equivalent to performing $n$ runs of Dijkstra until revealing on 
   average $n/2$ nodes.  Note that even on sparse networks, this scales quadratically in the 
   number of nodes, which is 
   prohibitive even on mid-size networks.  This is in sharp 
   contrast to the shortest-path single-source computation which takes 
   (near) linear time. 
Previous work \cite{YPMT:TKDE2006} proposed ways to scalably identify 
the set of reverse nearest neighbors of nodes, but did not address
higher reverse ranks. We are able here  (Section 
\ref{hardness:sec}) to provide 
an explanation, establishing that the naive solution is in a sense the 
best we can do for the exact problem:
 We leverage the theory of subcubic equivalence 
\cite{WilliamsW:FOCS10} and construct a reduction from graph radius
computation to reverse-rank single source computation.   The former is
known to have subcubic equivalence to all-pairs shortest-paths 
computation (APSP) \cite{AGV:SODA2015}.  

 This hardness result, fortunately, applies to the {\em exact} problem. 
An important contribution we make here (see Section \ref{SSrDR:sec}) is devising a novel, scalable,
  Dijkstra-like (sorted access), 
  {\em  approximate}   reverse-rank single-source algorithm, which
  provides estimates $\hat{\pi}_{ij}$ with a small relative error.
Since ranks are intrinsically slightly noisy measures of the
actual relations,  estimates with a small relative errors are often as good
as the exact values. 

  An essential component of our design is a
preprocessing step where we compute All-Distances Sketches (ADS) for all 
nodes \cite{ECohen6f,bottomk07:ds,ECohenADS:TKDE2015}.  The sketches 
provide us with a fast oracle which estimates $\hat{\pi}_{ij}$ from 
the distance $d_{ji}$.
In Section \ref{ADS:sec} we review the sketches and estimators as
applied in our context.  We note that we can apply any ADS algorithm,
and existing designs are suitable for  
sequential, shared-memory, and  node-centric message-passing computations
  \cite{ECohen6f,PGF_ANF:KDD2002,bottomk07:ds,hyperANF:www2011,ECohenADS:TKDE2015}.
A stand-alone  contribution we make here is
engineering an ADS algorithm for multicore architectures which
provides provable tunable tradeoff between overhead and concurrency.   
Our algorithm can be used for many 
other applications of the sketches which include estimating distances, 
closeness similarity, the distance distribution, and timed-influence
\cite{ECohen6f,PGF_ANF:KDD2002,bottomk07:ds,hyperANF:www2011,
CDFGGW:COSN2013,DSGZ:nips2013,timedinfluence:2014,ECohenADS:TKDE2015}.

  The estimation of single-source  reverse ranks can be done by first running
  Dijkstra's algorithm  to compute the single-source distances, and
  then apply the oracle we obtained
  in the preprocessing step to the computed distances.    This method, however, will not provide us
sorted access.    Our sorted access algorithm, 
similarly to Dijkstra,  also traverses a shortest-path tree rooted at the source,
but critically, instead of doing so in distance order, which would
violate rank-based sorted access, does so in order of
increasing estimated reverse ranks.  The correctness of our design
relies on key  insights on properties of reverse ranks.

\ignore{
\medskip
\noindent
{\bf Reverse-rank aggregation:}
An important motivation for sorted-access is that it enables efficient
aggregation of different rankings, 
following the model of  Fagin et al   \cite{FLN:jcss2003}.  In our
graph context,
distances $d_{ij}$, ranks $\pi_{ij}$, or reverse ranks $\pi_{ji}$ 
can be viewed as (inverse) relevance scores provided by node $i$.
For a set of seed nodes $S$ and a node $u$, we can define the
relevance score of $S$ to $u$ as some monotone function $A$ (for
example, minimum,
average, or harmonic average) of the
relevance scores to $u$ of each $i\in S$.
 The threshold algorithm of \cite{FLN:jcss2003} uses 
sorted access to 
  the ranking of each seed in order to efficiently output sorted access 
  according to the monotone aggregate $A$.  
The primitive of sorted access can be  provided by Dijkstra's
algorithm for distance and 
rank,  and can be (approximately) provided by our reverse-rank
single-source computation for  reverse ranks.  Another interesting
application is when edge lengths are
generated using a probabilistic model, a practice 
used  to increase robustness by making the ranking
less sensitive to insignificant differences in lengths, and also
to reward relations based on more independent paths \cite{ACKP:KDD2013,CDFGGW:COSN2013}.  
In this case we would be interested in aggregate ranks of several Monte Carlo
simulations of the model.
}


\subsection*{Reverse-rank influence}
Distance or reachability-based notions of centrality and influence of a set $S$ of seed nodes
are fundamental measures in network analysis.
In the general form \cite{timedinfluence:2014}, 
distance-based influence is defined with
respect to a
non-increasing decay function $\alpha(x) \geq 0$ (smoothing kernel) and node 
weights $\beta(i) \geq 0$.
The contribution of each node $j$ to the influence of $S$ is 
proportional to its weight $\beta(j)$ and decays with the distance of 
$j$ from $S$, $d_{Sj}=\min_{i\in S} d_{ij}$:
\begin{equation}\label{distinf:eq}
\textstyle{\Inf^{(d)}}(S) = \sum_j \beta(j) \alpha(d_{Sj})\ .
\end{equation}
Well studied special cases include
{\em Closeness centrality}
\cite{Bavelas:HumanOrg1948,Sabidussi:psychometrika1966,Freeman:sn1979,BlochJackson:2007,CoKa:jcss07,Opsahl:2010},
where $S$ contains a single node $i$, and
the celebrated {\em reachability-based} influence model of 
\cite{KKT:KDD2003}, obtained when 
$\alpha(x)=1$ for finite $x$ and $0$ otherwise.
Distance-based influence with threshold function $\alpha$
($\alpha(x)=1$ for $x\leq T$ and $\alpha(x)=0$ otherwise) was
studied in 
\cite{Gomez-RodriguezBS:ICML2011,ACKP:KDD2013,DSGZ:nips2013}
(With distance interpreted as elapsed time).\footnote{Reachability and 
  distance-based influence were also explored as the expectation
  when  edge lengths/presence are probabilistic.}

 Here we define  {\em reverse-rank influence} 
\begin{equation}\label{inf:eq}
\textstyle{\Inf^{({\pi^{-1}})}}(S) = \sum_j \beta(j) \alpha(\pi_{jS})\ ,
\end{equation}
where  $\pi_{jS}=\min_{i\in S} \pi_{ji}$. 
The special case of $\Inf^{({\pi^{-1}})}(v)$, the influence of a 
single node, with $\alpha(1)=1$ and $\alpha(x)=0$ otherwise 
is the number of reverse nearest neighbors of $v$, is the
influence measure proposed in \cite{KornMuthu:sigmod2000,YPMT:TKDE2006}.  
Our more flexible definition \eqref{inf:eq} is able to  account for the contribution of 
nodes with higher reverse-rank to the influence of our node.  For example,
by setting $\alpha(x)=1/x$ we achieve the effect that a reverse rank 
of $x$ contributes $1/x$ to the total influence of $v$; A node for 
which $v$ is the $5$th closest neighbor contributes to its influence 
$20\%$ of what it would have contributed as a reverse nearest
neighbor.  With $\alpha$ being a 
$T$-threshold function, rankers $u$ that rank $v$ in their top $T$
contribute $\beta(u)$ to $v$'s influence.  

\medskip
\noindent
{\bf Reverse-rank Influence Computation:}
  We show (Section  \ref{hardness:sec}) that the computation of exact reverse-rank influence, even
  for a single node, and even when $\alpha$ is a threshold function,
  has subcubic equivalence to APSP.  We therefore consider approximate
  influence $\widehat{\Inf}$ computed using approximate ranks
  $\hat{\pi}$.   Clearly $\widehat{\Inf}(S)$  can be computed using
$|S|$ single-source approximate reverse-rank computations (and using
$\hat{\pi}_{jS}=\min_{i\in S} \hat{\pi}_{ji}$). Surprisingly, however,
we show in Section \ref{infmax:sec}, that even with
large $|S|$, one single-source computation suffices.
\ignore{
For motivation, 
consider a network where
nodes correspond to user or movie entities.
The distance order of movies (rankees) from a user (ranker) 
can be interpreted as a ranking order of movies induced by the user.
The reverse-rank centrality of a movie $u$ is the number of users $v$ for
which the movie is in their top $T$.
The reverse-rank influence of a set of movies $S$ is then the number of
users which have at least one movie in $S$ in their top $T$ choices.
}

\medskip
\noindent
{\bf Reverse-rank Influence Maximization:}
An important coverage problem which is extensively explored
for reachability and distance-based influence, is {\em 
  influence maximization (IM)} \cite{KKT:KDD2003}:
For a given $s\geq 1$, identify a set of $s$ {\em seed}
nodes with maximum influence. 
Intuitively, such a set provides the best 
``coverage'' for its size with respect to the influence measure at
hand.  Here we consider IM with 
respect to our  reverse-rank influence function
$\Inf^{({\pi^{-1}})}$. 
The reverse-rank IM problem with $\alpha$ being a 
threshold function with parameter $T$ on our example user and movies 
data set is to find a set of $s$ movies  
which maximizes the number of users for which there is at 
least one movie from $S$  in their top $T$ choices. 

Similar to the distance-based influence function $\Inf^{(d)}$, $\Inf^{({\pi^{-1}})}$ is monotone and 
submodular, and even for simple threshold $\alpha$, when $s$
is a parameter,  the IM problem is NP hard. 
The most common and hugely successful algorithm for such coverage
problems is the greedy algorithm \cite{submodularGreedy:1978}, which iteratively
builds a seed set by selecting in each step a node with
maximum marginal 
contribution.  For submodular and monotone functions, greedy
 has the property that each prefix of the sequence of size $s$ has 
influence that is at least $1-(1-1/s)^s \geq 1-1/e$ of  the influence
of the optimal seed set of that 
size \cite{submodularGreedy:1978}.   Exact greedy, however, does not scale well for very large graphs.
For reverse-rank influence, an exact greedy sequence can be computed in cubic time in the number of 
nodes. 
When all nodes are both rankers and rankees, the graph is sparse, and 
we work with a threshold function $\alpha$ with parameter $T$, the 
computation reduces to $O(nT)$, by performing a single-source search 
from all nodes to find the $T$ nearest neighbors and computing a 
greedy cover.  But even this special case does not scale well for 
large graphs for larger values of $T$. 

 Approximate greedy and heuristics had been extensively studied for
reachability-based
\cite{Leskovec:KDD2007,CELFpp:WWW2011,TXS:sigmod2014,binaryinfluence:CIKM2014}
and distance-based \cite{DSGZ:nips2013,timedinfluence:2014} influence.
In particular, the \skim\ algorithm
\cite{binaryinfluence:CIKM2014,timedinfluence:2014}  computes in
near-linear time a full greedy permutation so that each prefix of size $s$
has approximation ratio of $1-(1-1/s)^s-\epsilon$.

  In Section \ref{infmax:sec}  we present a near-linear algorithm which computes,
an approximate  greedy sequence with
  respect to the {\em approximate} reverse-rank influence objective
  with threshold function:
$\widehat{\Inf}(S) = \{|z\in Z \mid \hat{\pi}_{zS}\leq T |\}$.    The algorithm we
  present builds on \skim, but incorporates critical adjustments  that utilize
the sorted access property of  our approximate reverse-rank single-source computations.


\subsection*{Experiments}

  Our experimental evaluation, detailed in Section
  \ref{experiments:sec}  was focused on scalability and solution
  quality, using publicly available anonymized social graph data
  sets.   Our ADS implementation runs on graphs with
  tens of millions of edges in tens of minutes on a single core,
  providing estimates with NRMSE (normalized mean square errors) of 
  6\%-13\%.   Our multithreaded design achieved speedup factors of
 3 to 4 on a machine with two CPUs and multiple cores.

  With the preprocessing in place,  our
approximate reverse-rank single-source computations have similar running
time to Dijkstra's algorithm (which computes single-source distances).
In particular, a reverse-rank single-source computation was performed in less
than 15 seconds on a single core on a graph with 4$\times10^6$ nodes
and 35$\times 10^6$ edges.   For comparison, this should be contrasted
with the running time of an exact reverse-rank
single source computation, which would have taken an estimated 6000 hours on
the same instance.

  Using our implementation, we are able to visualize the  reverse-rank distributions of some nodes in a
  large network, demonstrating how the distribution reveals
  information on the relative importance of a node in its locality.
 Prior to our work, it  was not possible to scalably compute these
 distributions on large graphs.

  Our approximate greedy IM implementation computes the full sequence
  on graphs with tens of millions of edges in minutes.  We also
  observe that for small graphs or small values of $T$, where we could
  compute an exact greedy sequence, the solution quality of our
  approximate sequence was very close to the exact one.

\ignore{
\subsection{Applications}

 The reverse-single source computation is a tool to compute the highest
 rankers of a rankee node, paying on the go.  given the preprocessing,
 the $k$ highest rankers of a query node $u$
(approximately) can be computed after visiting $k$ nodes and adjacent
edges.  This is a powerful contribution on its own.

The distance sketches provide a generic way to obtain reverse ranks from 
 distances which can be used in combination with distance oracles (Such as
  using ADS sketches \cite{CDFGGW:COSN2013} or similar \cite{ThZw01}) to obtain 
\rD -rank oracle:  For two nodes $i,j$, return (bounds on) $\pi_{ij}$.


 When edges are unweighted, the
  \rD-nearest neighbor is the neighbor with lowest degree, but
generally, even to certify that a node $j = \arg\min_\ell \pi_{\ell
  i}$ is the $h=\pi_{ji}$  \rD-nearest neighbor of $i$, we need to
determine the distances of the $h$th closest nodes to each of the
in-neighbors of $i$.  When using Dijkstra's algorithm, we need to
perform $h$ times the degree of $i$ scans.
This is in contrast to the shortest-path version which is linear in the degree of the node.
}

\ignore{  

  Another essential component is 
some foundations which allow us to apply incremental
  Dijkstra like computation when we are interested in finding the set
  of smallest \rD-ranks for a node.  Note that a simple application of
  Dijkstra's algorithm will not provide the reverse ranks in
  increasing order.
While \D -ranks  from $i$, $\pi_{ij}$, 
are increasing with $d_{ij}$, this does not necessarily hold for \rD 
-ranks $\pi_{ji}$. 
{\bf Example:}
for nodes $i,j,h$, we can also have $d_{ij} \ll d_{ih}$ but 
$\pi_{ij} > \pi_{ih}$. [ $d_{ij}=2$, $d_{ih}=10$,
but node $j$ has many nodes of distance $1$ from it, whereas node $h$
has $i$ as its closest neighbor.]

  In Section \ref{rDRNN:sec}  we show how to compute  \rD\  approximate nearest
 neighbor of a node by only examining its neighbors. In particular, we
 can compute  \rD\ nearest neighbors for all nodes in linear time.
 In Section \ref{SSrDR:sec} we present \rD\ single-source algorithms
 that have similar properties to Dijkstra's shortest paths algorithm:
For all $t$, computing the $t$ closest nodes only requires
traversing edges incident to these nodes alone.  This property, means
that when we are interested only in \rD\ $k$-nearest neighbors, we can
terminate the algorithm after it scans the first $k$ nodes (the
computation involves traversing these nodes and their adjacent edges).
In Section \ref{oracle:sec} we discuss \D / \rD -rank oracles.
}

 \section{Preliminaries} \label{prelim:sec}
  We introduce some necessary notation. 
For a numeric function $r:X$ over a set $X$,  the function $\kth_r(X)$ returns the $\kth$ smallest value in the
range of $r$ on $X$.  If $|X|<k$,  we define $\kth_r(X)$ as the
supremum in the range of $r$.  If $r$ is not specified, we return the
$\kth$ smallest value in $X$.

We work with networks modeled as directed or undirected  graphs
$G=(V,E)$ with nodes $V= [n]=\{1,\ldots,n\}$ and edges $E$ with
lengths $w(e)>0$.   We use $m=|E|$ for the number of edges.
A subset or all nodes $U\subset V$ are specified as {\em rankee}
nodes.
We use the notation $G^T$ for
the {\em transpose graph}, which is the graph with edges reversed. 

For nodes $i,j$, let $d_{ij}$ be the shortest-paths distance from $i$ to $j$.
For $y\geq 0$, the {\em rankee $y$-neighborhood of $i$}
is the set of rankee nodes within distance $y$ from $i$.  We denote
the neighborhood by $$N_i(y)=\{j\in U \mid d_{ij} \leq y\}$$ and
its cardinality by $n_i(y)=|N_i(y)|$.
We use the notation $\underline{N}_i(y) = \{j\in U \mid d_{ij} < y\}$
for the respective strict neighborhood and $\underline{n}_i(y)$ for
its cardinality.
For $i\in V$ and $j\in U$,  $\pi_{ij}$ denoted the {\em rank} of $j$ with
 respect to $i$.  When distances are unique, we have $\pi_{ij} =
 n_i(d_{ij})$, that is, equal to the number of rankee nodes
that are at least as closer to $i$ as $j$.
 When distances are not unique, we consider the range
$(\underline{\pi}_{ij}, \overline{\pi}_{ij}]$, where
\begin{eqnarray}
\underline{\pi}_{ij} &=& \underline{n}_i(d_{ij})\ ,\label{ntopi}\\
\overline{\pi}_{ij} &=& n_i(d_{ij})\nonumber\ .
\end{eqnarray}
According to what we want to capture, we can define the rank 
$\pi_{ij}$ as  either,
$\overline{\pi}_{ij}$,  $\underline{\pi}_{ij}+1$, a uniform at random choice from
the range, or as the midpoint of this range: 
$\pi_{ij}\equiv \frac{\underline{\pi}_{ij}+1+\overline{\pi}_{ij}}{2}\
.$
Our algorithms and implementation can be adapted to support all these choices.


  An important ingredient of our design is the computation of a data
  structure which allows us to efficiently estimate the number of
  rankees in a neighborhood.
 That is, for a query specified by a node $i$ and
  $d\geq 0$, return $\hat{n}_i(d)$.
 The data structure can be viewed as a set of lists $L(i)$, one for each
 node $i\in V$.  Each list $L(i)$ consists of pairs $(d,y)$ where $d$
 is a distance value and $y=\hat{n}_i(d) > 0$ is an estimate on $n_i(d)$.
The lists are sorted and increasing in both $d$ and $y$.
To estimate for $n_i(x)$ and $\underline{n}_i(x)$, from the list $L(i)$, we use
\begin{eqnarray}
\hat{n}_i(d) &=& \text{$y$ such that  }\ (x,y) = \arg\max_{(x,y)\in L(i) \mid x\leq
  d} x\   \label{estfromlist:eq}\\
\hat{\underline{n}}_i(d) &=& \text{$y$ such that  }\ (x,y) =
                             \arg\max_{(x,y)\in L(i) \mid x <
  d} x\ . \label{lowerestfromlist:eq}
\end{eqnarray}
That is, we look at the pair $(x,y) \in L(i)$ such that $x \leq d$ (or $x<d$)
is maximum and return $y$.
From the relations \eqref{ntopi}, we can obtain estimates 
$\hat{\overline{\pi}}_{ij} = \hat{n}_i(d_{ij})$ 
and $\hat{\underline{\pi}}_{ij}= \hat{\underline{n}}_i(d_{ij})$ from $L(i)$ if we
know $d_{ij}$.

The lists $L(i)$ are computed from All-Distances Sketches $\ADS(i)$, which are
the subject of the next section.

\section{All-Distances Sketches} \label{ADS:sec}
 
 We preprocess the graph to 
compute a set of All-Distances Sketches (ADS)
  \cite{ECohen6f,ECohenADS:TKDE2015} for the nodes in the graph.  
The sketches are defined with respect to a 
  parameter $k$ and a random permutation of rankee nodes.  We
  find it convenient to work with $r(i)\in [0,1]$ which is the permutation
  position of $i$ divided by $|U|$.  Alternatively, it is  
sometimes convenient to work instead with random hash based
 $r(i) \sim U[0,1]$.
The sketch $\ADS(i)$ of a node 
 $i\in V$ consists of a set of entries of the form $(j,d_{ij})$, consisting
 of a node $j\in U$ and the distance $d_{ij}$.  We assume that 
 $r(j)$ is either included in the entry or can be easily retrieved
 from $j$.  The set of rankee nodes included in $\ADS(i)$ is a random
 variable which depends on the assignment $r$:
\begin{equation} \label{botkADS}
j \in \ADS(i) \iff  r(j) \leq \kth_r\{h\in U \mid d_{ih} \leq d_{ij}\}\ .
\end{equation}

This ADS definition \eqref{botkADS} applies with 
unique and non-unique distances. A technical point is that for estimation with non-unique distances, 
we also maintain with $\ADS(i)$, as auxiliary, entries 
$(j,d_{ij})$ that satisfy for some $z\in\ADS(i)$
$r(j) =\kth_r\{h\in U\setminus \{z\} \mid d_{ih} \leq d_{iz}\}$ when
these entries are not already included in $\ADS(i)$
\cite{multiobjective:2015}
(When distances are unique, all these entries are already in $\ADS(i)$). 
With unique distances, the expected size of the sketches is exactly
$\sum_{i=1}^{|U|} \min\{1,k/i\} \leq k\ln 
 |U|$  with good concentration, but the sketch can be
much smaller when distances are not unique, while providing the same
statistical guarantees on estimate quality, which is why we separately treat
non-unique distances rather than tie break.


Our implementation of ADS computation is based on
  {\sc Pruned Dijkstra's}
  \cite{ECohen6f,bottomk07:ds,ECohenADS:TKDE2015}.  The pseudocode for
  the basic sequential version is provided as
  Algorithm~\ref{pDijkstra:alg} and uses  $O(km\ln n)$ edge
  traversals.  When applied with non-unique distances, the algorithm
  also includes the auxiliary entries.

\begin{algorithm}[h]
\caption{$\ADS$ set for $G$ via {\sc Pruned
    Dijkstra's}\label{pDijkstra:alg}}
{\small
\For {rankee node $u\in U$ by increasing $r(u)$}
{
Run Dijkstra's algorithm from $u$ on $G^T$\\
\ForEach{scanned node $v$}
{
\eIf 
{$d_{vu}> \kth\{y \mid (x,y)\in \ADS(v)\}$ \Or  \\ $|\{x\in\ADS(v) \mid
  d_{vx} \leq d_{vu}\}| > k$}
{prune Dijkstra at
  $v$}{$\ADS(v)\gets \ADS(v) \cup \{(r(u),d_{vu})\}$ }
}
}
}
\end{algorithm}
The term {\em scanned node} in the pseudocode refers to the event
where the node $v\in V$
 is popped from the Dijkstra priority queue.  Each node can be scanned
 at most once in each (pruned) Dijkstra search.  The scanned nodes are
 always a prefix of the nodes when sorted by increasing distance from
 $u$ in $G^T$. When
 a node $v$ is scanned, either $u$ is inserted to $\ADS(v)$ or the search
 is pruned at $v$.  Therefore, the number of node scans is equal to
 the ADS size.

  The algorithm builds the $\ADS$ of all nodes 
by considering one node $u\in U$ at a time and adding it as an entry in
  $\ADS(v)$ for all relevant $v$.
  To do so efficiently, we maintain the entries in $\ADS(v)$ as an array sorted by
  decreasing distances.   The
insertion condition then amounts to testing
if  $|\ADS(v)|<k$ or if
 the entry $(x,y)$ in the $|\ADS(v)|-k$ position in the array (the
$k$th smallest distance) has $y> d_{vu}$, or if
it has $y=d_{vu}$ but either $|\ADS(v)|=k$ or 
the entry $(x,z)$ in the $|\ADS(v)|-k-1$ position has $z>d_{vu}$.

  We refer to the
$k$th smallest distance in $\ADS(v)$  as the {\em threshold distance}
and denote it by $\Delta(v)$.  We also use the notation $*(v)$ for the
bit indicating if the $k+1$th smallest is equal to the $k$th smallest distance.
The prune condition can then be written as
\begin{equation} \label{prune:cond}
d_{vu} > \Delta(v) \text{ \Or\ } d_{vu}=\Delta(v) \text{ \And\ } *(v)\ .
\end{equation}
Observe that insertions can only  affect the $k$ last entries in the
current $\ADS$.
 Therefore, it suffices to keep only that ``tail'' part in active
memory.  When $k$ is small we can keep it as an array and implement
insertions by shifting.  When $k$ is
larger we can use a data structure that supports efficient insertions.


\subsection{Multithreading} \label{multithread:sec}
  {\sc Pruned Dijkstra's}, as stated, sequentially performs possibly dependent searches
  from all rankee nodes.  We propose here a design  which allows us to 
control in a principled way 
 the tradeoff between overhead and concurrency. 
We partition the $|U|$ pruned Dijkstra searches to batches, where
each batch is a consecutive set of nodes when ordered by increasing $r$.
All the searches in the same batch are made independent so that they
can be executed concurrently.  Each search
computes a set of {\em proposed entries} to sketches $\ADS(v)$, as 
contributions to a set $PE(v)$.  A proposed entry is created when a
node $v$ is visited and the pruning condition \eqref{prune:cond} is
not satisfied.  The pruning, however, and
hence the proposed entries, are computed  with respect to the
set of threshold distances and bits $(\Delta(v),*(v))$ for $v\in V$, as it was at the beginning of the batch.  Pseudocode for
an independent search thread is provided as Algorithm \ref{search:alg}.
 Each such Dijkstra search 
may generate a proposed ADS entry for multiple nodes.

At the end of a batch, for each node $v$, the proposed entries
$PE(v)$ from all the searches in the batch are merged
with (the $k$-tail of) $\ADS(v)$ (as it was in the beginning of the
batch) to compute an updated $\ADS(v)$ with respect to the end of the
batch.  
The merge is performed by scanning the entries $(u,d_{vu})$ in 
$PE(v)$ in order of increasing $r$ and applying the insertion 
procedure to $\ADS(v)$ as used in Algorithm \ref{pDijkstra:alg}: If 
the pruning condition \ref{prune:cond} does not hold, we insert $u$
and update $\ADS(v)$ (note that this updates $\Delta(v)$ and $*(v)$).
Note that not all proposed entries are incorporated, since 
the insertion rule is not satisfied with respect to the updated
$(\Delta(v),*(v))$ after processing previous $PE(v)$ entries.

\ignore{
The merge is performed by sorting $PE(v)$ by increasing distances 
(breaking ties by smaller rank).
We then scan the sorted PE list, maintaining scanned entries sorted by rank.
The current entry will be included if the number of $\ADS(v)$ entries
of no-larger distance and the number of processed $PE(v)$ entries of
smaller rank are together less than $k$.
This merge is performed in $O(\min\{k, |PE(v)|\log k\}+ |PE(v)|\log|PE(v)|)$ time.
\notinproc{Pseudocode for this merge is provided as Algorithm \ref{merge:alg}.}
}

\begin{algorithm}[h]
\caption{A  {\sc Pruned Dijkstra} thread (search from
  $u$) \label{search:alg}}
{\small
Run Dijkstra's algorithm from $u$ on $G^T$ \\
\ForEach{scanned node $v$}
{\eIf {$d_{vu} > \Delta(v)$ \Or $d_{vu}=\Delta(v)$ \And $*(v)$}{prune
    Dijkstra at $v$}{$\, PE(v)\gets PE(v) \cup \{(u,d_{vu})\}$}}
}
\end{algorithm}

\ignore{
\begin{algorithm}[h]
\caption{Multithreaded {\sc Pruned Dijkstras}: batch conclusion
  for node $i$ \label{merge:alg}}
For all nodes $v$ with $|PE(v)|>0$, we merge $PE(v)$ into the last $k$
entries in $\ADS(v)$ (the ``tail'' of entries with smallest distances) \\
We replace the tail of $\ADS(v)$ with the result, and update $\Delta(v)$.\\
sort $PE(i)$ by decreasing distance.  If we work with non-unique
distances, we break ties by decreasing $r$ value.\\
We read both lists (the tail of $\ADS(v)$) and $PE(v)$ backwards by increasing distance.  \\
Until all $PE(v)$ entries are read.
While fewer than $k$ elements from both lists are processed, all 
read $PE(v)$  elements are marked  ``in.''  This is because all these
proposed elements have one of the $k$ smallest $r$ values within their
distance, and this should be members of the final $\ADS(v)$.
\\
When more than $k$ elements are read in total, and $j$
elements from $PE(v)$, we identify and maintain 
the $(k-j)$th smallest $r$ values among
read elements from $PE(v)$. Intuitively, what we want to track is the
$k$th smallest $r$ value among read entries that are in the ADS.  But since
we know that all existing ADS entries have smaller $r$ value than all
proposed entries, we can instead maintain a shorter list.\\
The current 
$PE(v)$  element is marked ``in'' only  if it is below that $r$ value. In this case, it
is inserted to the list and one element is discarded.  Otherwise, it
is marked ``out.''  When an element
from the $\ADS(v)$ is read, then $j$ increases and the largest $r$ value is
discarded from the list.
After all elements in $PE(v)$ are read,
we perform a forward reading that merges all ``in'' PE elements
with the tail of $\ADS(v)$, maintaining decreasing distance (and $r$ value in case of
distance ties) order.
\end{algorithm}
}

\subsection{Concurrency/Overhead tradeoff analysis}
The sequential algorithm has the property that all generated entries
constitute final ADS entries.  The
multithreading algorithm computes proposed entries that may be eventually discarded.  
These discarded entries are the overhead of 
the multithreading algorithm.

More precisely, we define the overhead as the 
ratio of the expected number of 
discarded entries to the expected number of ADS entries per node.  The overhead depends on
how we partition the searches to batches.  Placing each search in
a separate batch would result in no overhead, but also no
concurrency.  Putting all searches in the same batch would have a very
large overhead, as none of the searches would be pruned.

 Note that the overhead of discarded entries corresponds to an
 overhead on edge traversals, which are the main cost of the algorithm.
In particular, we can bound the total work performed by the
multithreaded algorithm by multiplying 
 the sequential bound of $km\ln |U|$ by
$(1+h)$, where $h$ is a bound on the overhead.

We next propose batch partitions which allow us to bound the overhead.
We first observe that the search is never pruned for the $k$
nodes with lowest $r$ values.  Our first batch would contain these
nodes, and we can perform those searches
independently without overhead.  At the end of this first batch, all generated proposed entries $PE(i)$ would be sorted by distance 
to form $\ADS(i)$ with respect to those $k$ nodes.  As for subsequent
batches, we propose exponentially increasing batch sizes 
and show the following:
\begin{lemma}
For a parameter $\mu>0$, consider a 
partition to batches so that the $j$th batch ends at node in
position
$\lceil (1+\mu)^{j-1} k \rceil$ in the sorted order by increasing $r(v)$.
Then the expected overhead  is at most 
$h \leq \mu/\ln(1+\mu)-1$. 
\end{lemma}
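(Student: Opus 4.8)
The plan is to fix a single node $v$ and bound the overhead contributed to $\ADS(v)$ batch by batch; since the overhead is a ratio of expectations summed over all nodes, a per-node (indeed per-batch) bound suffices by linearity. Write $B_0=0$ and $B_j=\lceil(1+\mu)^{j-1}k\rceil$ for the right endpoint of batch $j$, so that batch $j$ consists of the rankees whose $r$-rank lies in $(B_{j-1},B_j]$. The first thing I would establish is a clean description of the threshold used during a batch: at the start of batch $j$, $\Delta(v)$ equals the $k$-th smallest distance from $v$ among the first $B_{j-1}$ rankees in $r$-order. This holds because the $k$ rankees closest to $v$ among those first $B_{j-1}$ are always inserted (each has at most $k-1$ rankees of that set strictly closer, so it passes the insertion test when scanned), hence the $k$ smallest distances in $\ADS(v)$ after processing the first $B_{j-1}$ rankees coincide with the $k$ smallest distances among them. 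Consequently a batch-$j$ rankee $u$ generates a proposed entry for $v$ exactly when $d_{vu}\le\Delta(v)$, i.e. when $u$ is among the $k$ closest to $v$ within the first $B_{j-1}$ rankees together with $u$ itself.

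The heart of the argument is then two probability computations, both of which use that the random permutation $r$ is independent of the distances, so that any fixed finite set of rankees is ordered by distance from $v$ uniformly at random. A rankee at $r$-rank $p$ becomes a retained (final $\ADS$) entry iff it is among the $k$ closest to $v$ among the first $p$ rankees, which happens with probability $\min\{1,k/p\}$; it is proposed in its batch $j$ iff it is among the $k$ closest within a set of $B_{j-1}+1$ rankees, which happens with probability $\min\{1,k/(B_{j-1}+1)\}$. Summing over the rankees of batch $j\ge 2$ (where $B_{j-1}\ge k$, so both minima drop the $1$) gives the expected numbers of proposed and retained entries
\[
E[P_j]=(B_j-B_{j-1})\frac{k}{B_{j-1}+1},\qquad
E[R_j]=\sum_{p=B_{j-1}+1}^{B_j}\frac{k}{p}.
\]

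To finish I would show the per-batch bound $E[P_j]\le\frac{\mu}{\ln(1+\mu)}E[R_j]$ and sum. Setting $\rho_j=(B_j+1)/(B_{j-1}+1)$, a direct rearrangement gives $E[P_j]=k(\rho_j-1)$ exactly, while bounding the harmonic sum below by an integral gives $E[R_j]\ge k\int_{B_{j-1}+1}^{B_j+1}dp/p=k\ln\rho_j$. Hence $E[P_j]/E[R_j]\le(\rho_j-1)/\ln\rho_j$, and since $x\mapsto(x-1)/\ln x$ is increasing on $(1,\infty)$ and the geometric choice of endpoints gives $\rho_j\le 1+\mu$, this ratio is at most $\mu/\ln(1+\mu)$. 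Batch $1$ proposes and retains the same $k$ entries, so its ratio is $1\le\mu/\ln(1+\mu)$. Because every retained entry was first proposed, the discarded entries are exactly proposed minus retained, so summing the per-batch bounds yields $E[\text{discarded}]\le(\frac{\mu}{\ln(1+\mu)}-1)\,E[\text{retained}]$, i.e. $h\le\mu/\ln(1+\mu)-1$, as claimed.

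The conceptual obstacles are the threshold characterization and, especially, the exchangeability argument for the proposal probability, since unlike the familiar $\min\{1,k/p\}$ retention probability the proposal probability must be computed against the start-of-batch population rather than the running population. The remaining work is routine but fiddly: verifying $\rho_j\le 1+\mu$ in the presence of the ceilings (which introduces only lower-order corrections, negligible in the regime $k\gg 1$), and extending the two probability computations to non-unique distances, where the auxiliary entries and the tie bit $*(v)$ must be carried through but do not change the leading behaviour.
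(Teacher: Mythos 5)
Your proposal is correct and follows essentially the same route as the paper's proof: the same per-batch decomposition, the same two exchangeability computations (proposal probability $k/(B_{j-1}+1)$ against the start-of-batch population versus retention probability $k/p$), and the same comparison of the resulting harmonic sum to $\ln(1+\mu)$. You are in fact somewhat more careful than the paper, which writes the harmonic-sum-to-logarithm step as an ``$\approx$'' and ignores the ceiling corrections that you flag explicitly; your integral lower bound and the monotonicity of $(x-1)/\ln x$ make that step rigorous.
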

\begin{proof}
Consider processing  a batch that starts at position $b_0+1$.
 The probability
  of a node in the batch to enter  $PE(i)$ is $\min\{1,
  \frac{k}{b_0+1}\}$.  Note that to generate a proposed entry, the node needs to be with
  distance smaller than $\Delta(i)$, that is be among the $k$ smallest
  distances among all the nodes processed up to the previous batch and
  itself.
This probability is exactly that of being in one of the first $k$
positions in a random permutation of $b_0+1$ nodes, which is $\min\{1,
  \frac{k}{b_0+1}\}$.
Now we can consider the probability that a node in the batch is a
final member of $\ADS(i)$. 
 If the node is in position $b_0+j$, the
probability is 
 $\min\{1,
  \frac{k}{b_0+j}\}$. 

We now consider a batch that has nodes in permutation positions
$b_0+1$ to $b_t$,  such that  $b_0>k$.
The ratio of good work to total work is
\begin{eqnarray*}
\frac{\sum_{j=1}^{b_t-b_0} 1/(b_0+j)}{(b_t-b_0)
  /(b_0+1)} & = & \frac{b_0+1}{b_t-b_0}\sum_{j=1}^{b_t-b_0}
\frac{1}{b_0+j} \\ &\approx& \frac{b_0+1}{b_t-b_0}\ln(\frac{b_t}{b_0+1}) 
\end{eqnarray*}

Thus, if we choose $b_t= (1+\mu) b_0$ for some $\mu>0$, we obtain
the ratio $\ln(1+\mu)/\mu$.   The overhead, by definition,  is the inverse of this
ratio minus $1$.
\end{proof}
In particular, we can see that $\mu=0.5$ results in overhead of 
20\% more edge traversals than the sequential algorithm.  Using 
$\mu=0.1$, has overhead of about 5\%.
The total number of batches is
$O(\log_{1+\mu} (|U|/k)) \approx O(\frac{\log(|U|/k)}{\mu})$, which
is logarithmic in the number of rankee nodes.

\subsection{Cardinality estimation} \label{NCest:sec}
We now discuss the computation of a list $L(i)$ 
from $\ADS(i)$. 
Recall that  $L(i)$ is a list of pairs the form 
$(d, \hat{n}_i(d))$.  There is one  pair for each unique distances $d$ in
$\ADS(i)$, and we assume $L(i)$ is sorted by increasing $d$.

  We review two estimators $\hat{n}_i(d)$: The bottom-$k$ estimator
  and the HIP estimator.  Both are unbiased, nonnegative, and 
have a small relative error,  with good 
concentration which depend on the ADS parameter $k$.   
The HIP estimate is tighter: Estimates are at least as good as
bottom-$k$, and with unique distances, has half the
variance of  the bottom-$k$ estimator.  The bottom-$k$ estimator, however, is
useful to us because it has some monotonicity property.
We can compute the lists
$L(i)$ using  either estimator or both.

\smallskip
\noindent
{\bf The bottom-$k$ estimator:}
This inverse probability estimator \cite{HT52} has
coefficient of variation (CV) at most $1/\sqrt{k-2}$
\cite{ECohen6f,ECohenADS:TKDE2015}. 
  To estimate $n_i(x)$, we take the $k$th
  smallest $r$ value among nodes in $N_i(x)$, which we denote by $\tau$. 
 If there are fewer than $k$ nodes in $N_i(x)$, we return the number of
 entries as our estimate. Otherwise, we  compute
  the probability $p$ that an $r$-value is below $\tau$.  When $r(v) \sim
  U[0,1]$,   $p=\tau$.
We then use the estimate $\hat{n}_i(x)=(k-1)/p$.

\smallskip 
\noindent 
{\bf The HIP estimator:} This estimator has CV at most $1/\sqrt{k-2}$
and with unique distances is most $1/\sqrt{2k-2}$
\cite{ECohenADS:TKDE2015} (see \cite{multiobjective:2015} for 
extension to non-unique distances).  The estimates are obtained as follows:
For each (non auxiliary) entry $j$ in $\ADS(i)$, we compute the threshold value
\begin{equation} \label{botkRCthreshold}
\tau_{ij}=
\text{k}^{\text{th}}_r\{h\in \ADS(i) \mid d_{ih}<d_{ij}\}\ .
\end{equation}

We then compute $p_{ij}$ as the probability of $r(j)< \tau_{ij}$.  If
there are fewer than $k$ entries lower than $d_{ij}$ then $p_{ij}=1$.
Otherwise, when
  $r(j) \sim U[0,1]$, we have $p_{ij}=\tau_{ij}$.  
We then take $a_{ij} = 1/p_{ij}$.  Finally, the HIP estimate (summed
over non-auxiliary entries)  is
$$ \hat{n}_i(x) = \sum_{j\in \ADS(i) \mid d_{ij}\leq x} a_{ij} \ .$$

\ignore{
\smallskip 
\noindent
\begin{quote} 
{\bf To think about:  A batched HIP estimator:}  Can we brings some of the
HIP advantages to non-strict ADS ?
We compute the estimate by processing each tier (entries with same
distance) together.  We  process distance tiers by increasing $d$.
We then compute
the estimate by adding up contributions for each tier.    

Idea:  Suppose that for each tier we also have information on the smallest
  $r$ value that did not make it into the sample (we only need to include
  it if it is smaller than the $k$th largest in previous tiers.)

  Then can we can apply a conditioning estimator for each member of
  the tier?
members not sampled get estimate of $0$.  For sampled members we
compute the inclusion probability conditioned on tanks of all other
nodes being the same.

\end{quote}
}

\smallskip
\noindent
{\bf Computing the estimates}
The estimation list $L(i)$ for both the bottom-$k$ and the HIP estimators
  can be computed by processing the 
  entries of $\ADS(i)$ in increasing distance order, maintaining the 
  $k$th smallest values in the prefix processed so far and accordingly
  the $k$th smallest value $\tau$, and computing
  the estimates $\hat{n}_i(d)$ when entries of distance $d$ are
  processed.

 An easy to verify property that is useful to us is that
the neighborhood size estimates for each node are non-decreasing with
distance from the node and therefore can only increase when the distance does:
\begin{lemma} \label{monest:lemma}
When $L(i)$ is computed using either bottom-$k$ and HIP estimates,
the estimates \eqref{estfromlist:eq} and \eqref{lowerestfromlist:eq} satisfy
\begin{eqnarray*}
 d_1 \leq d_2 &\implies& \hat{n}_i(d_1) \leq \hat{n}_i(d_2) \\
 d_1 \leq d_2 &\implies& \hat{\underline{n}}_i(d_1) \leq \hat{\underline{n}}_i(d_2)
\ .
\end{eqnarray*}
\end{lemma}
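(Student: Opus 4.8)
The plan is to reduce both inequalities to a single structural fact — that $L(i)$ is sorted so that its stored $y$-values are non-decreasing in its distance coordinates — and then to verify that fact separately for the bottom-$k$ and HIP estimators.

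First I would establish the reduction. By \eqref{estfromlist:eq}, the quantity $\hat{n}_i(d)$ is just the $y$-coordinate of the entry of $L(i)$ whose distance $x$ is largest subject to $x\le d$. Given $d_1\le d_2$, let $(x_1,y_1)$ realize the maximum for $d_1$. Since $x_1\le d_1\le d_2$, the pair $(x_1,y_1)$ is also feasible in the maximization defining $\hat{n}_i(d_2)$, so the distance selected for $d_2$ satisfies $x_2\ge x_1$. If the entries of $L(i)$ are ordered with $y$ non-decreasing in $x$, then $x_2\ge x_1$ forces $y_2\ge y_1$, i.e. $\hat{n}_i(d_2)\ge \hat{n}_i(d_1)$. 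The identical argument, with the strict condition $x<d$ of \eqref{lowerestfromlist:eq} in place of $x\le d$ (using $x_1<d_1\le d_2\Rightarrow x_1<d_2$), handles $\hat{\underline{n}}_i$. Thus it suffices to show that the raw estimate $\hat{n}_i(d)$ produced at the distances appearing in $\ADS(i)$ is non-decreasing in $d$, since $L(i)$ stores exactly one such value per unique distance.

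Next I would verify this monotonicity for each estimator. For HIP the estimate $\hat{n}_i(x)=\sum_{j\in\ADS(i)\mid d_{ij}\le x} a_{ij}$ is a sum of the strictly positive weights $a_{ij}=1/p_{ij}$ over entries within distance $x$; enlarging $x$ only enlarges the index set of the sum, so the estimate can only grow, and this case is immediate. For bottom-$k$ there are two regimes. While $N_i(x)$ contains fewer than $k$ nodes the estimate equals $|N_i(x)|$, which is non-decreasing in $x$. Once $|N_i(x)|\ge k$, the estimate is $(k-1)/\tau$, where $\tau$ is the $k$-th smallest $r$-value among the nodes of $N_i(x)$; enlarging $x$ enlarges $N_i(x)$, and the $k$-th smallest value of a set can only decrease as elements are added, so $\tau$ is non-increasing and $(k-1)/\tau$ is non-decreasing.

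The one point needing care — the main, if modest, obstacle — is the junction between the two bottom-$k$ regimes. Just before the $k$-th node is included the estimate is $k-1$; at the moment the $k$-th node enters, $\tau$ equals the largest $r$-value among these $k$ nodes, which is at most $1$ since $r(\cdot)\in[0,1]$, so the inverse-probability estimate is $(k-1)/\tau\ge k-1$. Hence no drop occurs across the boundary and the bottom-$k$ raw estimate is globally non-decreasing in $d$. Combining the two estimator cases with the reduction above yields both claimed inequalities.
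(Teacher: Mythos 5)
Your proof is correct and follows the natural argument that the paper leaves implicit (it only remarks that the property is ``easy to verify''): reduce to monotonicity of the stored $y$-values of $L(i)$ in the distance coordinate, then check that the HIP estimate is a sum of positive terms over a growing index set and that the bottom-$k$ estimate $(k-1)/\tau$ grows as $\tau$ (the $k$th smallest $r$-value in a growing neighborhood) shrinks, with the boundary between the counting regime and the inverse-probability regime handled as you do. Nothing further is needed.
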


\ignore{
\section{\rD\ nearest neighbor} \label{rDRNN:sec}

  To identify the \rD\ nearest neighbor of a node $i$, it suffices to consider its
  incoming neighbors and take the node $j$ with minimum $n_j(w_{ji})$.
\begin{lemma} 
  The \rD\ NN of a node $i$  is 
$\rNrNN(i) = \arg\min_{j \mid (j,i)\in E} n_j(w_{ji})$.
\end{lemma}
\begin{proof}
From Lemma \ref{monotone:lemma}, the node $j$ with minimum
$\pi_{ji}$ must be an in-neighbor of $i$, and for this neighbor, the
distance is the length of the edge $w_{ji}$.  Therefore for this node
$j$, $n_j(w_{ji}) = \pi_{ji}$.
For other in-neighbors
$j'$, with $\pi_{j'i} \geq \pi_{ji}$, we have that
$n_{j'}(w_{j',i}) \geq \pi_{j'i}$.  So the expression only over-estimates their
ranks.
\end{proof}

  This seems to simplify our problem, but we do not have either $\pi_{ji}$ or
  $n_r(i)$ at hand.  We instead (see pseudo code in Algorithm~\ref{rdrnn:alg}) use the ADS-based estimates:
The algorithm scans incoming edges $(j,i)$ of $i$ and returns 
$\min_{(j,i)\in E} \hat{n}_j(w_{ji})$ and the respective node $j$.

\begin{algorithm2e}[h]
\caption{Approximate \rD\ Nearest Neighbor \label{rdrnn:alg}}
\DontPrintSemicolon
\KwIn{Node $i$}
\KwOut{Approximate \rD\ NN of $i$ and its approximate \rD -rank}
\SetKwData{rNrNN}{v}
\SetKwData{rrank}{r}
\SetKwData{E}{E}
\SetKwFunction{Return}{return}
$\rrank\gets \infty$ ; $\rNrNN \gets \emptyset$ \;
\ForEach{$(j,i)\in \E$}{
\If{$\hat{n}_j(w_{j,i}) < \rrank $}{ $\rrank\gets \hat{n}_j(w_{j,i})$;
  $\rNrNN \gets j$}
}
\Return{$\rNrNN,\rrank$}
\end{algorithm2e}

  We have to be a bit careful with using estimates.  We are using the
  minimum estimate as an estimate of the minimum.  But this is biased
  down, in particular when there are many \rD -ranks that are close to
  the minimum.

  Since the estimates are well concentrated, using
  $k=O(\log(n)/\epsilon^2)$ ensures with high probability, even for
  worst-case instances,  that all
neighborhood estimates, of all nodes, have a relative error of at most
$\epsilon$. 

{\bf Q: figure out a good way to do adaptive confidence bounds here.}
}

\section{Reverse-Rank Single-Source} \label{SSrDR:sec}

As noted in the introduction, if we are  interested in computing reverse-ranks from a source $i$ to all nodes,
we can compute the distances $d_{ji}$ by applying Dijkstra's
algorithm from $i$ on $G^T$, and return  estimated reverse-ranks
from the distances using \eqref{estfromlist:eq} and
\eqref{lowerestfromlist:eq}.
The nodes, however,  are processed in order of increasing distance, which does not
necessarily corresponds to the order by increasing reverse ranks (recall the
example in the introduction).  
Therefore, if we are only
 interested in correctly identifying nodes with highest reverse ranks and we
 apply this algorithm, we can not prune the computation and we  will 
scan a much larger portion of the graph than needed.

In this section we present an approximate reverse-rank single-source 
algorithm that provides sorted-access:  Computing 
nodes in order of 
 increasing (approximate) reverse-rank.    
 We start by establishing a basic
 monotonicity property of
 reverse-ranks that is essential for the correctness of our design.

\subsection{SP monotonicity of reverse-ranks} \label{SPmonotonicity:sec}

We noted in the introduction that reverse-rank order does not
necessarily correspond to distance order.   For nodes on a shortest
path, however, we can show that the reverse-ranks, and the respective
bottom-$k$ estimates,  are monotone:
 \begin{lemma} \label{monotone:lemma}
Consider a shortest path $i_t,\ldots,i_0$ in $G$.
Then
$\overline{\pi}_{i_j   i_0}$, $\underline{\pi}_{i_j   i_0}$, and the
bottom-$k$ estimates
$\hat{\underline{\pi}}_{i_j i_0}$ and $\hat{\overline{\pi}}_{i_j
  i_0}$, are all
non-decreasing with $j$.
\ignore{
\begin{itemize}
\item 
 $\pi_{i_j 
  i_0}$  and the bottom-$k$ estimates $\hat{\pi}_{i_j i_0}$  (when 
distances are unique) 
\item 
The interval bounds 
$\underline{\pi}_{i_j i_0}$ and $\overline{\pi}_{i_j i_0}$ and the respective 
bottom-$k$ estimates 
 $\hat{\underline{\pi}}_{i_j i_0}$, $\hat{\overline{\pi}}_{i_j i_0}$ (when distances are 
 not unique) 
\end{itemize}
are  non-decreasing with $j$. 
}
\end{lemma}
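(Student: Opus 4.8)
The plan is to reduce the entire statement to a single geometric fact about nested neighborhoods, after which all four monotonicities fall out. First I would exploit the defining property of shortest paths: since $i_t,\ldots,i_0$ is a shortest path, every subpath is shortest, so $i_{j-1}$ lies on a shortest path from $i_j$ to $i_0$ and hence
\[
d_{i_j i_0}=d_{i_j i_{j-1}}+d_{i_{j-1} i_0}.
\]
Combining this with the (directed) triangle inequality $d_{i_j h}\le d_{i_j i_{j-1}}+d_{i_{j-1}h}$ yields the key claim: for any rankee $h$ with $d_{i_{j-1}h}\le d_{i_{j-1}i_0}$ one gets $d_{i_j h}\le d_{i_j i_0}$, and the strict version of the inequality is preserved as well. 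In neighborhood notation this reads
\[
N_{i_{j-1}}(d_{i_{j-1} i_0})\subseteq N_{i_j}(d_{i_j i_0}),\qquad
\underline{N}_{i_{j-1}}(d_{i_{j-1} i_0})\subseteq \underline{N}_{i_j}(d_{i_j i_0}).
\]
That is, stepping one node further from the source along the path can only enlarge the relevant (strict) neighborhood.

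Given this nesting, the exact ranks are immediate from their definition \eqref{ntopi}: $\overline{\pi}_{i_{j-1}i_0}=n_{i_{j-1}}(d_{i_{j-1}i_0})\le n_{i_j}(d_{i_j i_0})=\overline{\pi}_{i_j i_0}$, and likewise $\underline{\pi}_{i_{j-1}i_0}\le\underline{\pi}_{i_j i_0}$ using the strict neighborhoods. Since these hold for every consecutive pair, transitivity makes $\overline{\pi}_{i_j i_0}$ and $\underline{\pi}_{i_j i_0}$ non-decreasing in $j$.

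The only real work is the two bottom-$k$ estimates, where the subtlety is that $\hat{\overline{\pi}}_{i_{j-1}i_0}$ and $\hat{\overline{\pi}}_{i_j i_0}$ are read off \emph{different} sketches, so Lemma~\ref{monest:lemma} (monotonicity in distance for a fixed node) does not apply directly. The fact I would isolate is that the bottom-$k$ estimator is monotone under set inclusion when the same assignment $r$ is used: if $A\subseteq B$ then the $k$th smallest $r$-value in $A$ is at least that in $B$ (adding elements can only lower an order statistic), so the inverse-probability estimate $(k-1)/\tau$ can only grow; the degenerate cases $|A|<k$ are handled by $|A|\le k-1\le (k-1)/\tau_B$. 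Because $\hat{\overline{\pi}}_{i_j i_0}=\hat n_{i_j}(d_{i_j i_0})$ is exactly the bottom-$k$ estimate of $N_{i_j}(d_{i_j i_0})$ (the sketch entries at distance $\le d_{i_j i_0}$ determine the $k$th smallest $r$-value of this neighborhood), the nesting above combined with set-inclusion monotonicity gives $\hat{\overline{\pi}}_{i_{j-1}i_0}\le\hat{\overline{\pi}}_{i_j i_0}$, and the analogous argument with strict neighborhoods handles $\hat{\underline{\pi}}$. Transitivity again yields monotonicity over all $j$.

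I expect the set-inclusion monotonicity step to be the crux, and it is precisely where the restriction to the bottom-$k$ estimator matters: its value depends only on a single order statistic (the $k$th smallest $r$-value) of the neighborhood, which behaves monotonically under inclusion even across different reference nodes. The HIP estimator, by contrast, is a sum of per-entry contributions $a_{ij}=1/p_{ij}$ whose inclusion probabilities are computed relative to each source's own distance order; when the source changes from $i_{j-1}$ to $i_j$ these contributions are recomputed against a different ordering, so the sum need not respect the neighborhood nesting. This is exactly why the lemma claims monotonicity of the estimates only for the bottom-$k$ case.
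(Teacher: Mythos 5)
Your proposal is correct and follows essentially the same route as the paper: neighborhood containment along the shortest path (which the paper asserts and you derive explicitly from the subpath property and triangle inequality), followed by monotonicity of the exact cardinalities and of the bottom-$k$ order statistic under set inclusion. Your additional handling of the $|A|<k$ degenerate case and your remark on why HIP fails are consistent with the paper (the latter appears there as a footnote).
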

\begin{proof}
Consider $j<h$, then $d_{i_j i_0} < d_{i_h i_0}$.   The neighborhoods
relations is $N_{i_j}(d_{i_j i_0}) \subseteq N_{i_h}(d_{i_h i_0})$.  Therefore,
$\overline{\pi}_{i_j i_0} \leq  \overline{\pi}_{i_h i_0}$. 
Similarly,
$\underline{N}_{i_j}(d_{i_j i_0}) \subseteq \underline{N}_{i_h}(d_{i_h
  i_0})$ and thus the cardinalities satisfy
$\underline{\pi}_{i_j i_0} \leq  \underline{\pi}_{i_h i_0}$. 

In the case of bottom-$k$ estimates, the claim follows again from
containment of neighborhoods.  
Let $\tau_1$ be the $k$th smallest $r$ value in the contained
 set and let $\tau_2$ be the $k$th smallest $r$ value in the
 containing set.  Then clearly, $\tau_1 \geq \tau_2$.
Recall that the bottom-$k$ cardinality estimate is
 $(k-1)/\tau$.  We have  $(k-1)/\tau_1 \leq (k-1)/\tau_2$.
\end{proof}

\notinproc{\footnote{
  Note that the path property may not hold for  HIP 
  estimates.  This is because the HIP estimates are sensitive to the distance order in which nodes
are scanned.  So even estimates obtained for the same neighborhood,
when scanned from two different nodes, will be different.
Thus, the approximate reverse-rank single source algorithm we
are about to present uses  bottom-$k$
estimates.
Only after that we can apply HIP to obtain tighter estimates.
}}

\subsection{Algorithm and analysis}
The pseudocode for our reverse-rank single source algorithms is
presented as   Algorithm~\ref{ssrdr:alg}.  The algorithm has the same
structure as  Dijkstra's algorithm from source $s$  in the transposed
graph $G^T$.
 The algorithm maintains each unprocessed node $j$ that is adjacent to an already processed node 
 in a min priority queue.  The entry contains $(j,\overline{d}_{js})$,
 where $\overline{d}_{js}$ is the upper bound on the distance from $j$
 to $s$.  This upper bound 
 serves as the priority in  Dijkstra's algorithm and 
is the minimum over processed nodes $h$ of $d_{hs}+w_{jh}$.   
Our reverse-rank single source algorithm 
uses instead a priority as follows.  We first look at 
$\hat{n}_{\overline{d}_{js}}(j)$, which is 
an upper bound on the estimated  reverse-rank, computed 
 according to the best current upper bound $\overline{d}_{js}$ on the
distance.  From Lemma~\ref{monest:lemma},
$\hat{n}_{\overline{d}_{js}}(j) \geq  \hat{n}_{d_{js}}(j)$. Therefore,
when the upper bounds on the distances are tightened, the priority
can only decrease.
Now, two nodes in the priority queue can have the same estimate $\hat{n}_{\overline{d}_{js}}(j)$.
 In this case, we break ties according to the distance
upper bounds $\overline{d}_{js}$,  always preferring the node with
lower $\overline{d}_{js}$.  If both $\hat{n}_{\overline{d}_{js}}(j)$
and $\overline{d}_{js}$ are the same, the tie can be broken arbitrarily.

  The next node $h$ that is selected from the queue is the one with minimum 
  priority according to lexicographic order on $(\hat{n}_{\overline{d}_{js}}(j),\overline{d}_{js})$.
 For this node $h$ we set $d_{hs} \gets \overline{d}_{hs}$ (a
 correctness proof that indeed $\overline{d}_{hs}$ is the distance is
 provided below).
  We then scan all in-coming edges $(j,h)$.  If $j$ is not already in
  the priority queue, we insert it with
  $\overline{d}_{ji}=d_{hs}+w(j,h)$ and the respective priority.  If
  $j$ is already in the queue we compare $x\gets d_{hs}+w(j,h)$ to
  the current $\overline{d}_{js}$.  If $x< \overline{d}_{js}$, we update
$\overline{d}_{js} \gets x$ and update the priority to
$\hat{n}_{\overline{d}_{js}}(j)$.

  Note that the algorithm applies for both directed and undirected
  graphs.  When applied to directed graphs, the algorithm returns
  reverse ranks only for
 nodes that can reach $s$.  For completeness, we 
explain how to extends this, if needed, also for nodes $v$
  that can not reach $s$, that is, $d_{vs}=\infty$.    We first need
  to precisely define the rank $\pi_{vs}$ in this case.  All rankee nodes that can
  not be reached from $s$ can be viewed as having rank range 
$(|R_v|, |U|]$, where $R_v$ is the set of rankee nodes reachable 
from $v$.  Now note that we can estimate $|R_v|$ by the 
cardinality estimate associated with the maximum-distance  entry in
$L(v)$.

\begin{algorithm2e}[h]
\caption{Approximate Reverse-rank Single-Source \label{ssrdr:alg}}
{\small
\DontPrintSemicolon
%
\SetKwData{E}{E}
\SetKwData{Q}{Q}
\SetKwFunction{output}{Output}
\SetKwData{s}{s}
\SetKwData{v}{v}
\SetKwData{u}{u}
\KwIn{Source node $\s$}
\KwOut{A sequence $(\v,d_{\s \v},\hat{\pi}_{\v \s})$ in increasing $\hat{\pi}_{\v \s}$}
\SetKwArray{dist}{dist}
\SetKwArray{rDra}{rDr}
\SetKwFunction{addQ}{add}
\SetKwFunction{Init}{Init}
\SetKwFunction{extractmin}{extract\_min}
\SetKwFunction{decreasekey}{decrease\_key}
\tcp{Node object $\v$: $\v.\dist$,$\v.\rDra$ are upper bounds on
  $d_{\s \v}$ and $\hat{\pi}_{\v \s}$. Initialization $\v.\Init(d,r)$: $\v.\dist \gets d$; $\v.\rDra \gets r$}
$\Q \gets \emptyset$ \tcp*{Initialize an empty min priority queue $\Q$ of node objects
prioritized by lex order of $(\v.\rDra,\v.\dist)$}
$\s.\Init(1,0)$ \tcp*{Initialize source node object}
$\Q.\addQ{\s}$\tcp*{put source node in queue}

\While{$\Q$ is not empty}
{
$\v \gets \Q.\extractmin{}$ \;
$\output{\v}$ \tcp*{output and scan $\v$}
\ForEach{$\u \mid (\u,\v)\in \E$ and $\u$ not scanned}
{
  $d\gets \v.\dist+w_{vu}$\;

  \eIf{$\u \not\in \Q$}
  {
    $\u.\Init(\hat{n}_{\u}(d),d)$\; 
    $\Q.\addQ(\u)$
  }
  {
    \If{$d < \u.\dist$}
    {
$\Q.\decreasekey(u, (\hat{n}_{\u}(d),d))$
      \tcp*{Update priority of $\u$
        in $\Q$: $\u.\dist \gets d$; 
      $\u.\rDra \gets \hat{n}_{\u}(d)$ }
    }
  }
 }
}
}
\end{algorithm2e}


For correctness, we need to show that when a node $v$ is popped out
from the priority queue, we have the correct distance $d_{sv}$ and
thus can obtain the bottom-$k$ estimate on $\pi_{vs}$.  This holds
if all nodes that are on the shortest path from $v$ to $s$ were
scanned before $v$. 
\begin{theorem}
When Algorithm \ref{ssrdr:alg}  is applied with
exact cardinalities $n_i(d)$ or with
bottom-$k$ estimates, it
traverses a shortest-paths tree from $s$.
\end{theorem}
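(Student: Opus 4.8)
The plan is to adapt the classical correctness proof of Dijkstra's algorithm. Since extraction from the queue is now ordered by the reverse-rank priority $(\hat{n}_v(\overline{d}_{vs}),\overline{d}_{vs})$ rather than by distance, the single monotonicity of distances used in the textbook proof is no longer available, so I would replace it by the two monotonicity statements already in hand. Lemma~\ref{monest:lemma} provides the \emph{per-node} monotonicity $\hat{n}_v(\overline{d}_{vs})\geq \hat{n}_v(d_{vs})$ whenever $\overline{d}_{vs}\geq d_{vs}$ (for exact cardinalities this is immediate from the definition of $n_v$), while Lemma~\ref{monotone:lemma} provides the \emph{SP monotonicity} of the reverse rank, and of its bottom-$k$ estimate, along a shortest path to $s$. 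I will argue by strong induction on the order in which nodes are extracted that every node $v$, when extracted, satisfies $\overline{d}_{vs}=d_{vs}$; the base case is $s$ itself, extracted with distance $0$. Throughout I use that $\overline{d}_{vs}$ is always the length of an actual $v$-to-$s$ path produced by the relaxations, hence $\overline{d}_{vs}\geq d_{vs}$.

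For the inductive step I fix the node $v$ currently extracted and suppose for contradiction that $\overline{d}_{vs}>d_{vs}$. Let $v=u_\ell,u_{\ell-1},\dots,u_0=s$ be a shortest path from $v$ to $s$ in $G$, so that every prefix toward $s$ is again shortest, and let $y=u_r$ be the node of smallest index on this path that has not yet been extracted at the moment $v$ is taken from the queue. Its predecessor $u_{r-1}$ was extracted earlier, so by the induction hypothesis it had the correct distance; when $u_{r-1}$ was scanned it relaxed the in-edge $(u_r,u_{r-1})\in E$ and set $\overline{d}_{ys}\leq d_{u_{r-1}s}+w(u_r,u_{r-1})=d_{ys}$, so at the time $v$ is extracted $y$ sits in the queue with its true distance $\overline{d}_{ys}=d_{ys}$ and hence with priority $(\hat{n}_y(d_{ys}),d_{ys})=(\hat{\pi}_{ys},d_{ys})$. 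If $y=v$ we are already done, since then $v$ has the correct distance; so assume $y=u_r$ with $r<\ell$, which makes $d_{ys}<d_{vs}$ strict by positivity of edge lengths.

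The crux is to chain the two monotonicities against the extraction inequality. Because $v$ was extracted in preference to $y$, their priorities compare lexicographically as $(\hat{n}_v(\overline{d}_{vs}),\overline{d}_{vs})\leq_{\mathrm{lex}}(\hat{\pi}_{ys},d_{ys})$. On the other hand, Lemma~\ref{monest:lemma} gives $\hat{n}_v(\overline{d}_{vs})\geq \hat{n}_v(d_{vs})=\hat{\pi}_{vs}$, and applying Lemma~\ref{monotone:lemma} to the shortest path $u_\ell,\dots,u_0$ with $r<\ell$ gives $\hat{\pi}_{ys}=\hat{\pi}_{u_r s}\leq\hat{\pi}_{u_\ell s}=\hat{\pi}_{vs}$. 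Together these force the first coordinates to be equal, $\hat{n}_v(\overline{d}_{vs})=\hat{\pi}_{ys}$, so the lexicographic comparison collapses to the tie-breaking coordinate and yields $\overline{d}_{vs}\leq d_{ys}<d_{vs}$, contradicting $\overline{d}_{vs}\geq d_{vs}$. Hence $\overline{d}_{vs}=d_{vs}$, completing the induction.

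I expect the main obstacle to be exactly this last combination: one must use the per-node monotonicity of Lemma~\ref{monest:lemma} and the cross-node SP monotonicity of Lemma~\ref{monotone:lemma} in the correct directions so that the reverse-rank coordinates are squeezed to equality, and only then exploit the distance tie-break to obtain a strict contradiction. The remaining points are routine: the argument is unaffected by whether $\hat{\pi}$ denotes $\hat{\overline{\pi}}$, $\hat{\underline{\pi}}$, or the exact $\overline{\pi},\underline{\pi}$, since Lemma~\ref{monotone:lemma} covers all of these and $n_v$ is monotone in distance by definition, and the predecessor edges selected by the successful relaxations then form the claimed shortest-paths tree rooted at $s$.
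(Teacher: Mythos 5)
Your proof is correct and follows essentially the same route as the paper's: an induction on the scan order whose engine is the SP-monotonicity of reverse-rank estimates (Lemma~\ref{monotone:lemma}) together with the per-node monotonicity of the cardinality estimates (Lemma~\ref{monest:lemma}). You merely execute the standard Dijkstra exchange argument in more explicit detail (comparing the queue priority of the extracted node against the first unsettled node on its shortest path and squeezing the rank coordinates to equality before invoking the distance tie-break), whereas the paper phrases the same induction directly on the sequence sorted by true priority.
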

\begin{proof}
 Consider a source $s=y_0$ and let $y_0,y_1,\ldots$ nodes sorted by increasing
$\hat{\pi}_{y_i s}$ with ties broken according to $d_{y_i s}$.
We show that a node can be scanned 
only after all the nodes on its shortest path to $s$ are also scanned.  
This means that when scanned, its 
 current priority is computed according to its true distance to $s$,
 and therefore, uses the bottom-$k$  reverse-rank estimate.

We show correctness by induction on $t$.  Assume that the scanned
nodes are $y_0,\ldots,y_t$ and that for these nodes we have 
exact SP distances and thus the estimates
$\hat{\pi}_{y_i s}$.
 Consider now $y_{t+1}$. Consider the shortest path $P$ from $y_{t+1}$
 to $s$. It follows from Lemma \ref{monotone:lemma} that the reverse-rank
 estimates are
 monotone non-decreasing along the path. Also note that distances to
 $s$ are strictly smaller.   Therefore, all the nodes of path
$P$, except $y_{t+1}$,  are in $\{y_0,\ldots,y_t\}$,  and therefore, by induction
 already scanned.  
\end{proof}

\section{Reverse-rank Influence} \label{infmax:sec}

\SetKw{Skip}{skip}
\SetKw{Continue}{Continue}
\SetKw{Prune}{prune}
\SetKw{Return}{return}
\SetKw{Or}{or}
\SetKw{And}{and}
\SetKwFunction{Append}{append}
\SetKwArray{Wasdelta}{best\_seed}
\SetKwArray{Index}{inverted\_sample}
\SetKwArray{Size}{sample\_size}
\SetKwArray{SeedList}{seedlist}
\SetKwData{Rank}{rank}
\SetKwFunction{Sort}{sort}
\SetKw{Skip}{skip}
\SetKw{Return}{return}
\SetKwFunction{Insert}{insert}
\SetKwFunction{Merge}{merge}
\SetKwData{Ads}{ADS}

\SetKwArray{Coverers}{coverers}
\SetKwArray{Coverage}{coverage}

 In this section, we consider the computation and maximization of
reverse-rank influence.  Consider a graph with a set of rankee nodes
$U\subset V$ and ranks $\pi_{ji}$ defined for rankees $i\in U$ and
$j\in V$.   Let $\beta(j)\geq 0$ be the ranker weights of $j\in V$.
For a set $S \subset U$ of seed nodes, the
reverse-rank influence is
$\Inf(S)=\sum_{j\in Z} \beta(j)\alpha(\pi_{jS})$, where
$Z\subset V$ is the set of ranker nodes (those with $\beta(z)>0$).
From Corollary \ref{hardnessinfluence:coro}, the exact computation of
$\Inf(S)$
has subcubic equivalence to APSP, even 
when restricted to threshold functions $\alpha$ and a single seed.

We therefore focus on scalably computing the approximate influence
$$\widehat{\Inf}(S)=\sum_{j\in Z} \beta(j)\alpha(\hat{\pi}_{jS})\ ,$$
where $\hat{\pi}_{jS} = \min_{i\in S} \hat{\pi}_{ji}$.

Note that to compute $\widehat{\Inf}(S)$ it suffices to compute
$\hat{\pi}_{jS}$ for all ranker nodes $j\in Z$.
Moreover, when $\alpha$ is a threshold function for some $T\ll n$, or
more generally,  any  function with 
$\alpha(x)=0$ for all $x>T$, it suffices to
compute $\hat{\pi}_{jS}$ only for nodes with $\hat{\pi}_{jS} \leq T$.
A naive way to compute these values is to perform, from each seed
$i\in S$,  a single-source
reverse ranks search from $i$, using Algorithm~\ref{ssrdr:alg},
and terminate the search when we scan a node with $\hat{\pi}_{ji} > T$.
We can then combine the results of the different searches, computing
the minimum $\hat{\pi}_{ji}$ of each node $j$ that is scanned in at
least one of the searches, to obtain the values $\hat{\pi}_{jS}$.
This naive computation requires $|S| |E|\log n$ operations (assuming the
lists $L(j)$ are provided) when $T$ is large.  But even with smaller
$T$, a node $j$ can be scanned multiple times, once for each seed  $i\in S$ with
$\hat{\pi}_{ji} \leq T$.
We now show how to remove the dependence on the number of seeds $|S|$.
\begin{theorem}
For a set of seeds $S$, we can compute the values $\hat{\pi}_{jS}$ for
all $j\in V$ using 
$O(|S|+|E|\log n)$ operations.  When $\alpha(x)=0$ for $x\geq T$,
$|E|$ is replaced by the number of incoming edges to nodes $j$ that
satisfy $\hat{\pi}_{jS} \leq T$.
These bounds assume that the lists $L(j)$ are provided for $j\in V$.
\end{theorem}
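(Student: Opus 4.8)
The plan is to reduce the whole multi-seed computation to a single Dijkstra-like traversal by exploiting monotonicity of the cardinality estimator, so that the cost never depends on $|S|$ beyond reading the seeds. Write $d_{jS}=\min_{i\in S} d_{ji}$ for the distance from a ranker $j$ to its nearest seed. The starting point is the observation that, since $\hat{n}_j$ is non-decreasing in its argument (Lemma~\ref{monest:lemma}),
\[
\hat{\pi}_{jS}=\min_{i\in S}\hat{\pi}_{ji}=\min_{i\in S}\hat{n}_j(d_{ji})=\hat{n}_j\!\left(\min_{i\in S}d_{ji}\right)=\hat{n}_j(d_{jS}).
\]
Thus it suffices to compute, for every node $j$, its distance $d_{jS}$ to the nearest seed and then do a single lookup in $L(j)$; the problem becomes a nearest-seed shortest-path computation carried out under a reverse-rank priority.

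First I would run a multi-source version of Algorithm~\ref{ssrdr:alg}: initialize the priority queue with every seed $i\in S$ at distance bound $0$ and priority $(\hat{n}_i(0),0)$, and otherwise leave the algorithm unchanged, scanning a node by relaxing its incoming edges and ordering the queue lexicographically by $(\hat{n}_j(\overline{d}_{jS}),\overline{d}_{jS})$, where $\overline{d}_{jS}$ is the running upper bound on $d_{jS}$. As in the single-source case, the extracted priority of a node equals its true $\hat{\pi}_{jS}$ exactly when $\overline{d}_{jS}=d_{jS}$, i.e. when every node on a shortest path from $j$ to its nearest seed has already been scanned. Since the estimator used at each node is drawn from its own list $L(j)$, the bottom-$k$ estimator (not HIP) must be used so that the path-monotonicity of Lemma~\ref{monotone:lemma} is available.

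The main effort, and the step I expect to be the chief obstacle, is the correctness argument adapting the single-source theorem to several sources. I would list all nodes as $y_0,y_1,\ldots$ by increasing $(\hat{\pi}_{yS},d_{yS})$ and induct: assuming $y_0,\ldots,y_t$ are scanned with tight distances, consider $y_{t+1}$, let $i^*$ be its nearest seed, and let $P:\ y_{t+1}=p_m,\ldots,p_0=i^*$ be a shortest path. The delicate point is that an interior node $p_\ell$ may have a \emph{different} nearest seed than $i^*$, so one cannot invoke Lemma~\ref{monotone:lemma} directly on $\hat{\pi}_{p_\ell S}$. Instead I would route through $i^*$: the subpath $p_\ell,\ldots,p_0$ is a shortest path to the rankee $i^*$, so Lemma~\ref{monotone:lemma} gives $\hat{\pi}_{p_\ell i^*}\le\hat{\pi}_{p_m i^*}=\hat{\pi}_{y_{t+1}S}$ (the equality because $i^*$ realizes $d_{y_{t+1}S}$), and then monotonicity of $\hat{n}_{p_\ell}$ together with $d_{p_\ell S}\le d_{p_\ell i^*}$ yields
\[
\hat{\pi}_{p_\ell S}=\hat{n}_{p_\ell}(d_{p_\ell S})\le\hat{n}_{p_\ell}(d_{p_\ell i^*})=\hat{\pi}_{p_\ell i^*}\le\hat{\pi}_{y_{t+1}S}.
\]
Combined with the strict distance drop $d_{p_\ell S}\le d_{p_\ell i^*}<d_{y_{t+1}S}$, this places every interior $p_\ell$ before $y_{t+1}$ in the sorted order, hence already scanned by induction; so $\overline{d}_{y_{t+1}S}=d_{y_{t+1}S}$ and the extracted priority is the correct $\hat{\pi}_{y_{t+1}S}$. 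This simultaneously shows that nodes are emitted in increasing $\hat{\pi}_{\cdot S}$ order, i.e. the traversal retains sorted access.

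Finally I would read off the complexity. Initialization touches the $|S|$ seeds once (build-heap in $O(|S|)$), each node is extracted once and each edge relaxed at most once, and every queue operation costs $O(\log n)$, giving the bound $O(|S|+|E|\log n)$. For the threshold case $\alpha(x)=0$ when $x\ge T$, the sorted-access property lets the traversal halt the instant a node with $\hat{\pi}_{jS}>T$ is extracted: by then all nodes with $\hat{\pi}_{jS}\le T$ are scanned, and the only edges ever relaxed are the incoming edges of scanned nodes, which is exactly the claimed replacement of $|E|$ by the number of incoming edges to nodes with $\hat{\pi}_{jS}\le T$.
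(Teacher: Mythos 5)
Your proposal is correct and follows essentially the same route as the paper: initialize the queue with all seeds at distance $0$, reduce $\hat{\pi}_{jS}$ to $\hat{n}_j(d_{jS})$ via the monotonicity of Lemma~\ref{monest:lemma}, and extend the shortest-path monotonicity of Lemma~\ref{monotone:lemma} from a single target to the set $S$. Your handling of interior path nodes (bounding $\hat{\pi}_{p_\ell S}\le\hat{\pi}_{p_\ell i^*}$ via $d_{p_\ell S}\le d_{p_\ell i^*}$ rather than asserting that $i^*$ is also their nearest seed) is a slightly more careful version of the same step and is, if anything, cleaner in the presence of ties.
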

\begin{proof}
 We slightly modify Algorithm
 \ref{ssrdr:alg} by initializing the priority queue  with entries with priorities
 $(i.\dist,i.\rDra)= (0,1)$ for 
each node $i\in S$.  The algorithm execution then  proceeds as with a
single source node.
For correctness, we can show that nodes are scanned (popped from the queue) in increasing lex
order of $(\hat{n}_j(d_{jS}),d_{jS})$, and at the point they are scanned we
have
$j.\dist = d_{jS}$ and thus $j.\rDra = \hat{n}_j(d_{jS})$.

To see that, first note that $d_{jS}$ suffices to obtain
$\hat{\pi}_{jS}$.  This is because, using Lemma~\ref{monest:lemma},
$\hat{n}_j(d_{jS}) \equiv \min_{i\in S} \hat{n}_j(d_{ji}) =
\hat{n}_j(\min_{i\in S} d_{ji})$.  

 For correctness, we need to show that the monotonicity property (Lemma
\ref{monotone:lemma}) to sets $S$:
Consider a shortest path $i_t,\ldots,i_0$ from $i_t$ to $S$ (that is,  $i_0$ is the
closest $S$ node to $i_t$).  Note that this implies that $i_0$ is the
closest $S$ node to all $i_j$.

It follows from \ref{monotone:lemma} 
that $\overline{\pi}_{i_j   i_0}$, $\underline{\pi}_{i_j   i_0}$, and the
bottom-$k$ estimates
$\hat{\underline{\pi}}_{i_j i_0}$ and $\hat{\overline{\pi}}_{i_j
  i_0}$ are non-decreasing with $j$.  
We  observe that $\overline{\pi}_{i_j   i_0} = \overline{\pi}_{i_j 
  S}$ and similarly $\underline{\pi}_{i_j   i_0} =
\underline{\pi}_{i_j   S}$ and this holds for the bottom-$k$ estimates 
obtained using $d_{i_j S}$.  Therefore, monotonicity holds also when
$i_0$ is substituted with $S$.
\end{proof}

\subsection{Influence Maximization}

 Here we consider uniform ranker weights $\beta(z)=1$ and 
 $\alpha$ that is a threshold function for some $T$.   
The influence of a set $S\subset U$ of rankee nodes is then 
the number of rankers that have at least one node from $S$ among their 
top $T$ rankees:
\begin{equation}  \label{rrInfT}
\Inf(S) = |\{  z\in Z \mid \pi_{zS} \leq T\}|\ .
\end{equation}

The goal of the IM problem is to find a set $S$ of rankee nodes
of a certain size which maximizes $\Inf(S)$.
A common approach to such coverage problems is the greedy algorithm.
Greedy repeatedly selects a rankee node which has maximum marginal influence.
For each $s \geq 1$, the set $S$ of the first $s$ selected seeds is guaranteed
to have influence that is at least $1-(1-1/s)^s$ of the maximum
possible by $s$ seeds.
Algorithm \ref{exactgreedyim:alg} is an
exact greedy algorithm for our reverse-rank IM problem with influence
function \eqref{rrInfT}.
 The computation of the
algorithm is dominated by Dijkstra computations from each ranker node that
are stopped once $T$ rankees are popped from the priority queue.  Recall
that when distance are not unique, we can work with multiple
definition of the rank (see Section \ref{prelim:sec}), but with all of them, we
can determine the ranks once at most $T+1$ rankees are popped.  
example, when we use $\overline{\pi}$, then
if the
$T+1$ rankee has the same distance $d$ as the $T$ rankee, then all
rankees of distance $d$ are excluded (they have rank larger than $T$).
 Even when all nodes are rankees, and thus at most $T$ nodes are
 popped in total in each Dijkstra run,  the required
computation is $\Omega(T |E| \log |V|)$, which does not scale well for
large values of the threshold $T$.

\begin{algorithm2e}[!h]
\caption{Exact greedy reverse-rank IM\label{exactgreedyim:alg}}
{\small 
\DontPrintSemicolon
\KwIn{Directed graph $G=(V,E)$, ranker nodes  $Z\subset V$, rankee
  nodes $U\subset Z$, threshold $T$}
\KwOut{Exact greedy sequence $\SeedList$ }
$\SeedList \gets \perp$ \tcp*[h]{Output list of (seed,marginal influence)}\;
\lForAll{rankee nodes $u\in U$}{$\Coverage{u} \gets \emptyset$}
\ForAll{ranker nodes $z\in Z$}
{
$\Coverers{z} \gets \emptyset$\;
 Run Dijkstra from $z$ in $G$, until  (we can determine that) $\pi_{zu} > T$\\
 \ForEach{rankee $u\in U$ with $\pi_{zu}\leq T$}{$\Coverers{z} \gets \Coverers{z}
   \cup \{u\}$\; $\Coverage{u} \gets \Coverage{u} \cup \{z\}$}
}
\While{There are rankees $u$ with $|\Coverage{u}| > 0$}
{
  $v \gets \arg\max_{u\in U\setminus S} |\Coverage{u}|$\;
  Append $(u,|\Coverage{u}|)$ to $\SeedList$ \;
 \ForEach{$z\in \Coverage{u}$}
 {\ForEach{$v\in \Coverers{z}$}
 {Remove $z$ from $\Coverage{v}$}
 }
Delete $\Coverage{u}$
}
\Return{\SeedList} 
}
\end{algorithm2e}

\subsection{Approximate Greedy IM}
We next obtain a near-linear algorithm using two relaxations.
First, 
the greedy selection, and thus the statistical guarantees we obtain,
are with respect to the relaxed influence function \eqref{rrInfT}
where $\hat{\pi}$ replaces $\pi$:
\begin{equation}\label{infhat:eq}
\widehat{Inf}(S) = |\{  z\in Z \mid \hat{\pi}_{zS} \leq T\}|\ . 
\end{equation}
Second, we do not compute an exact greedy sequence for
$\widehat{Inf}(S)$ but instead use 
an approximate greedy algorithm: At each step, selects a 
node with marginal influence that is approximately 
(within a small relative error) the maximum.

Our design adapts the influence 
 maximization algorithms \skim\ and T-\skim 
 \cite{binaryinfluence:CIKM2014,timedinfluence:2014} which are
 designed for reachability-based \cite{KKT:KDD2003} and distance-based 
\cite{Gomez-RodriguezBS:ICML2011,DSGZ:nips2013,timedinfluence:2014}
influence with  threshold functions. 
\notinproc{\footnote{The extension to general $\alpha$ and $\beta$ can be
  achieved by more complex designs which build on 
the distance based designs from \cite{timedinfluence:2014}, but are outside the scope
 of this paper.}}
We quickly review \skim, which 
remarkably, when all nodes are both rankers and rankees,
computes a full approximate greedy permutation in near linear
time.
To do so efficiently, \skim\
samples nodes not covered by previously selected seeds, and maintains for each candidate seed node
the number of sampled nodes it covers.  
Reachability-based \skim\ performs a pruned reverse graph searches 
from the node to determine the nodes that cover it. 
The distance-based \skim\ performs backward pruned Dijkstra searches.
The 
node that first reaches some number $K$ of samples has approximately maximum
marginal influence and is selected as a seed.   The {\em sample-size parameter} $K$ determines a tradeoff between
computation and accuracy. 
 \skim\  then updates the samples so that they are with
respect to the updated marginal influences with the coverage of the
new seed node removed.  \skim\ also updates the representation of the
residual problem. 
The updates are performed using a respective forward 
(Dijkstra) search from the new seed to reveal all nodes that it 
covers.  When a previously sampled node becomes covered, 
the samples of the nodes covering it are adjusted to reflect their
reduced marginal coverage.
Sampling is then resumed until another node reaches
a sample size of $K$. We repeat the 
process of sampling, selecting a seed, and updating the residual
problem until a desired number of seeds is selected, a desired
coverage is achieved, or all nodes are covered.

 Our Algorithm \ref{im:alg} , reverse-rank \skim\ (\rD-\skim), follows the
\skim\ design, of iterating the 
selection of a new seed node (rankee) via sample 
building and updates.  The reverse-rank problem, however, requires some
critical adaptations.

When sample building, we repeatedly select 
random uncovered ranker nodes $z$.  We then 
run Dijkstra's algorithm from  $z$ but stop the search when the
approximate rank
 $\hat{\pi}_{zu}$ exceeds $T$.  For each visited rankee node $u$,  we increment the
 sample size $\Size{u}$ and also add $u$ to a list $\Index{z}$ (the
 list of nodes where $z$ is included in the sample).
This process stops when the first rankee $u$ reaches $\Size{u}=K$.
The node $u$ then becomes the next seed node.
We then apply our sorted-access reverse-rank single source
computation from $u$, up to rank $T$, to determine the coverage of the new
seed $u$.  We mark all uncovered visited nodes as covered.
 For each newly covered ranker $z$, we scan $\Index{z}$ and decrement
$\Size{v}$ for each $v\in \Index{z}$.  We then delete $\Index{z}$.
For each covered ranker $z$, we 
maintain the (approximate) rank of the best seed $\Wasdelta[z].rank =
\min_{v\in S} \hat{\pi}_{vz}$ and the corresponding minimum distance
$\Wasdelta[z].dist = \min_{v\in S} d_{vz}$ (note that the node with
minimum distance must have minimum estimated rank).
 The purpose of maintaining $\Wasdelta$  is to enable pruning of
reverse searches.  Pruning is critical for the near-linear computation
bound of the algorithm (without it, we can construct examples were the
bulk of covered nodes is revisited with each new seed, resulting in 
$\Omega(|\SeedList| m)$ computation).

A search from the new seed $u$ is always pruned at $z$ when
$\hat{\pi}_{uz}>T$, but is also pruned when
\begin{eqnarray}
\lefteqn{\hat{\pi}_{uz} >  \Wasdelta[z].rank \, \text{ or
  }} \label{prunecond}\\
&& \hat{\pi}_{uz} =
 \Wasdelta[z].rank \text{ and } d_{uz} \geq \Wasdelta[z].dist\  .\nonumber
\end{eqnarray}

 We now need to show that also with this pruning,
the algorithm maintains the following invariant
\begin{lemma}
After the processing of a new seed node,
all nodes $z$ with 
$\min_{v\in S} \hat{\pi}_{vz} \leq T$ have
$\Wasdelta[z].rank = \min_{v\in S} \hat{\pi}_{zv}$ 
and
$\Wasdelta[z].dist = \min_{v\in S} d_{zv}$ 
and all other nodes have
$\Wasdelta[z].rank = +\infty$.  
\end{lemma}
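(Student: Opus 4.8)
The plan is to prove the invariant by induction on the number of seeds already selected. Write $S$ for the seed set before the current step, $u$ for the newly selected seed, and $S'=S\cup\{u\}$; for a ranker $z$ let $\rho(z)$ and $\delta(z)$ denote the stored values best\_seed$[z]$.rank and best\_seed$[z]$.dist. Recall from Lemma~\ref{monest:lemma} that $\hat\pi_{zv}=\hat n_z(d_{zv})$ is non-decreasing in $d_{zv}$, so for every $z$ the seed of $S$ that attains $\delta(z)=\min_{v\in S}d_{zv}$ also attains $\rho(z)=\min_{v\in S}\hat\pi_{zv}$, and $\rho(z)=\hat n_z(\delta(z))$. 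The base case $S=\emptyset$ is immediate, since every $\rho(z)=+\infty$. In the inductive step I must verify two things: (i) every node actually scanned by the reverse-rank search from $u$ ends with the correct pair for $S'$, and (ii) every node the pruning keeps the search from reaching retains its old pair, and that old pair is already correct for $S'$.

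For (i), the search from $u$ is Algorithm~\ref{ssrdr:alg}, whose correctness theorem guarantees that it traverses a shortest-path tree rooted at $u$ and scans nodes in increasing lexicographic order of $(\hat\pi_{zu},d_{zu})$; hence when $z$ is scanned its distance $d_{zu}$, and therefore $\hat\pi_{zu}$, are exact. If $\hat\pi_{zu}\le T$ and $u$ strictly improves the pair of $z$ in the sense of \eqref{prunecond}, the algorithm resets $(\rho(z),\delta(z))$ to $(\hat\pi_{zu},d_{zu})$; by the inductive hypothesis this new pair is exactly $\min_{v\in S'}(\hat\pi_{zv},d_{zv})$ (whether $z$ was previously covered, so the old pair was the $S$-minimum beaten by $u$, or previously uncovered, so $\hat\pi_{zu}\le T$ lies below every $S$-rank), and $z$ becomes covered. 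Otherwise $u$ does not improve $z$ --- either a previous seed already dominates $u$ at $z$, or $\hat\pi_{zu}>T$ --- and leaving the stored pair unchanged keeps it equal to $\min_{v\in S'}(\cdot)$ on covered nodes and $+\infty$ on uncovered ones.

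The substance is (ii): showing the pruning is safe. Fix a node $z'$ the search never reaches, choose a shortest path from $z'$ to $u$ in $G$, and let $z$ be a node on it at which the search was pruned, so that $d_{z'u}=d_{z'z}+d_{zu}$. If $z$ was pruned because $\hat\pi_{zu}>T$, the set version of Lemma~\ref{monotone:lemma} established in the preceding theorem gives $\hat\pi_{z'u}\ge\hat\pi_{zu}>T$, so $u$ neither covers $z'$ nor lowers any finite stored rank. If $z$ was pruned by \eqref{prunecond}, let $v\in S$ attain $\delta(z)$; converting the rank comparison in \eqref{prunecond} back to distances via the monotonicity of $\hat n_z$ shows $d_{zv}=\delta(z)\le d_{zu}$, and then the triangle inequality together with additivity on the shortest path yields
\[
d_{z'v}\;\le\;d_{z'z}+d_{zv}\;\le\;d_{z'z}+d_{zu}\;=\;d_{z'u}.
\]
Applying Lemma~\ref{monest:lemma} to $\hat n_{z'}$ gives $\hat\pi_{z'v}\le\hat\pi_{z'u}$, and hence $(\hat\pi_{z'v},d_{z'v})\le_{\text{lex}}(\hat\pi_{z'u},d_{z'u})$. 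Thus the pair $u$ would contribute to $z'$ is dominated by the contribution of the already-present seed $v$, so adding $u$ does not change $\min_{v\in S'}(\cdot)$ at $z'$; a short check on whether $z'$ was already covered (if not, then $\hat\pi_{z'u}\ge\hat\pi_{z'v}>T$, so $z'$ stays uncovered with stored rank $+\infty$) confirms the old value is correct for $S'$.

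I expect the crux to be the distance inequality in (ii). The difficulty is that the seed $v$ dominating the pruned node $z$ need not have $z$ on its own shortest path to $z'$, so one cannot simply push the monotonicity of Lemma~\ref{monotone:lemma} along a single path. The resolution is to combine three facts of different types: additivity $d_{z'u}=d_{z'z}+d_{zu}$ on the shortest path to the new seed, the triangle inequality $d_{z'v}\le d_{z'z}+d_{zv}$ to the old seed, and the domination $d_{zv}\le d_{zu}$ extracted from the prune test; the monotonicity of the neighborhood estimate $\hat n_z$ (Lemma~\ref{monest:lemma}) is what lets me move freely between rank and distance comparisons so that these three become commensurable. Everything else --- the lexicographic tie-breaking carried through each inequality, and the covered/uncovered bookkeeping --- is routine once this inequality is in place, together with the observation that $z'$ is unreached precisely when every shortest path to $u$ is blocked by a prune, which is what guarantees the pruned $z$ on the chosen path exists.
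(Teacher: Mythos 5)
Your proposal is correct and follows essentially the same route as the paper's proof: induction on the seed set, extracting the distance domination $d_{zv}\le d_{zu}$ from the prune condition via the monotonicity of $\hat n_z$, and then combining shortest-path additivity with the triangle inequality and monotonicity again to show the new seed cannot improve any node behind a pruned one. Your write-up is somewhat more explicit than the paper's (which compresses part (i) and the covered/uncovered bookkeeping into a sentence), but the key inequality and its role are identical.
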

\begin{proof}
This property clearly holds when pruning only when
$\hat{\pi}_{uz}>T$, since after inserting
a new seed node $u$ our reverse-rank search from $u$ visits all nodes
with  $\hat{\pi}_{vz} \leq T$.

We  establish the claim using induction on added seeds.
The base of the induction is when $S$ is empty and $\Wasdelta[z].rank = +\infty$.
 Assume now that our invariant holds and let $s_2$ be a newly selected 
seed node.  Let $u_1$ be a node on which we prune the 
search from $s_2$.  
From the condition \eqref{prunecond} there exist  a seed node $s_1$
such that $\hat{\pi}_{u_1,s_1} < \hat{\pi}_{u_1,s_2}$
or 
$\hat{\pi}_{u_1,s_1} = \hat{\pi}_{u_1,s_2}$ and 
$d_{u_1,s_1} \leq d_{u_1,s_2}$.
From the definition of our estimators,
$\hat{\pi}_{u_1,s_1} < \hat{\pi}_{u_1,s_2}$ implies 
$d_{u1,s_1} < d_{u_1,s_2}$. 
Combining, we obtain that $d_{u1,s_1} \leq d_{u_1,s_2}$. 

Now assume to the contrary there is a node $u_2$ such that $u_1$ is on 
the shortest path from $u_2$ to $s_2$ and 
$\hat{\pi}_{u_2,s_2}  <  \hat{\pi}_{u_2,s_1}$
or
$\hat{\pi}_{u_2,s_2}  =  \hat{\pi}_{u_2,s_1}$ and
$d_{u_2,s_2}  <  d_{u_2,s_1}$.   We show that this is 
not possible.


 Using the above and triangle inequality we obtain
 $d_{u_2,s_1} \leq d_{u_2,u1}+d_{u_1,s_1} \leq d_{u_2,u_1}+
d_{u_1,s_2} = d_{u_2,s_2}$.   A property of our estimates is that for
any three nodes
$d_{u_2,s_1} \leq d_{u_2,s_2}$ implies $\hat{\pi}_{u_2,s_1} \leq \hat{\pi}_{u_2,s_2}$.
\end{proof}

The analysis of computation and approximation quality uses components from the 
analysis of $T$-\skim \cite{timedinfluence:2014}.  
An important critical component in the analysis is that we 
can ``charge'' edge traversals used for sample building  to increases in sample sizes. 
When there are many non-rankee nodes, we can construct worst-case 
graph where non-rankees are repeatedly traversed without incrementing 
sample counts.  In realistic models, however, and when all nodes are 
rankers or rankees, we would expect such 
popular ranker hub nodes to  be covered quickly by the first few 
selected seeds. 
Another component of the analysis that carries over from $T$-\skim\
is bounding the number of updates to $\Wasdelta[z]$.  The argument 
there critically 
relies on the sample based approximate greedy selection. 
 The approximation quality of the algorithm can only be 
guaranteed probabilistically and with respect to approximate ranks $\hat{\pi}_{uz}$.
To summarize, when we run the algorithm with $K=O(\epsilon^{-2}\log
n)$, and prune sampling searches using the approximate ranks
$\hat{\pi}$,
we obtain the following.
\begin{theorem}
With very high probability,  for all
$s \geq 1$, the influence $\widehat{\Inf}$ of the first $s$ seed nodes is at least
$1-(1-1/s)^s-\epsilon$ times the maximum possible $\widehat{\Inf}$
with $s$ seeds.
When all nodes are both rankers and rankees, 
the algorithm uses  $O(|E| \epsilon^{-1} \log^3 n +
|E|\epsilon^{-2}\log n)$ operations.
\end{theorem}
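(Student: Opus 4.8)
The plan is to establish the two claims separately, in each case adapting the corresponding component of the $T$-\skim\ analysis \cite{timedinfluence:2014} to the reverse-rank setting. The two genuinely new ingredients are the sorted-access reverse-rank single-source computation (Algorithm~\ref{ssrdr:alg}), used to reveal the coverage of a new seed in increasing estimated-rank order, and the correctness of the pruning rule \eqref{prunecond}, which is exactly the content of the invariant lemma established immediately above. Throughout I would treat $\widehat{\Inf}$ as a coverage function on the ground set of rankers: each candidate rankee $u$ covers the set of rankers $z$ with $\hat{\pi}_{zu}\le T$, and $\widehat{\Inf}(S)$ counts the rankers covered by at least one seed of $S$. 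As a union of coverage sets this function is monotone and submodular, so it automatically inherits the classical greedy guarantee, and it remains only to control (i) the error from approximate marginal selection and (ii) the total work.

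For the approximation guarantee I would run the sample-based selection argument. At the start of each greedy step the residual problem is $\widehat{\Inf}$ restricted to the still-uncovered rankers; sampling draws uniform uncovered rankers, and the sample count accumulated for a rankee $u$ is, after rescaling, an unbiased estimator of $u$'s residual marginal coverage. With sample-size parameter $K=O(\epsilon^{-2}\log n)$, a Chernoff bound shows that in a single step the node that first reaches $K$ samples has residual marginal coverage within a $(1\pm\epsilon)$ factor of the true maximum residual marginal, with failure probability polynomially small in $n$. A union bound over the at most $n$ greedy steps yields the ``very high probability'' statement, and composing the per-step relative error $\epsilon$ with the submodular-greedy inequality gives, for every prefix of size $s$, influence at least $\bigl(1-(1-1/s)^s-\epsilon\bigr)$ times the optimum attainable by $s$ seeds, exactly as in the reachability and distance-based cases.

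For the running time I would partition the work into sampling and updating. The sampling searches are backward reverse-rank Dijkstra runs pruned at rank $T$; following the $T$-\skim\ charging scheme, I would charge each edge traversed during sampling to an increment of some rankee's sample count and bound the total number of such increments. This is precisely where the hypothesis that all nodes are both rankers and rankees is used: popular hub rankers are covered by the first few selected seeds, so no node is repeatedly traversed without producing sample increments, and the sampling work is $O(|E|\,K)=O(|E|\,\epsilon^{-2}\log n)$. The update work is the cost of the reverse-rank single-source searches from each new seed together with the maintenance of the best-seed records; correctness of these pruned searches is the invariant lemma above, which guarantees that pruning by \eqref{prunecond} never discards a node whose best rank can still improve. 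The number of times any best-seed record changes is then bounded by the $T$-\skim\ update-counting argument, which itself relies on the sample-based approximate greedy selection; this contributes the $O(|E|\,\epsilon^{-1}\log^3 n)$ term, with the logarithmic factors accounting for priority-queue operations and the doubling used in counting updates.

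The step I expect to be the main obstacle is the running-time charging argument rather than the approximation bound. In the reverse-rank setting nodes are scanned in increasing \emph{estimated-rank} order rather than distance order, so the standard ``each covered node is revisited only a bounded number of times'' accounting cannot be quoted verbatim and must be re-derived using the SP-monotonicity of reverse ranks (Lemma~\ref{monotone:lemma}) together with the pruning invariant. Particular care is needed to ensure that edges traversed merely to \emph{reach} a node at which the search is then pruned are also absorbed into the charging scheme, so that these ``boundary'' traversals do not inflate the claimed bound.
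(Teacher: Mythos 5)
Your plan follows essentially the same route as the paper, which itself only sketches this theorem by citing the $T$-\skim\ analysis: the same decomposition into sample-based approximate greedy selection with $K=O(\epsilon^{-2}\log n)$, the same charging of sampling edge traversals to sample-count increments (with the ``all nodes are rankers and rankees'' hypothesis invoked in the same informal way to rule out repeatedly traversed hub nodes), and the same reliance on the pruning invariant and the $T$-\skim\ update-counting argument for the residual-update cost. If anything, your identification of the charging argument under estimated-rank (rather than distance) scan order as the step requiring re-derivation via Lemma~\ref{monotone:lemma} is more explicit than what the paper provides.
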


\subsection{Approximability of the exact problem}   Distance-based
influence maximization is known to be at least as hard as max cover
also in terms of inapproximability, by a seminal result of Feige \cite{feige98}.  Thus, we know that in a sense
Greedy is the best scalable algorithm.
What we can say about
reverse-rank influence maximization with a threshold kernel  $T$ 
is that it is at least as hard as max cover,  
when each element can be a member of at most $T$ sets.
The problem is NP-hard for $T\geq 2$ (by reduction to max vertex cover), but
Feige's  inapproximability result does not apply.  This leaves open
the possibility that some polynomial-time algorithms have better
approximation ratio than Greedy.

When $T=1$, the influence function is simply  the number of reverse nearest
neighbors.  In this case,  the coverage sets of different nodes are disjoint and
influence maximization is trivial:  The greedy permutation which
selects nodes in decreasing order according to number of reverse
nearest neighbors is optimal.

When $T=2$, each node can be covered by at most two other nodes, which
is similar to max vertex cover, which is also NP hard, but has a
polynomial approximation algorithms that achieves a slightly better
approximation ratio than the greedy guarantee of $1-(1-1/s)^s$
\cite{FeigeLangberg:JAlg2001}.
The Linear Programming based algorithm, however, does not scale for large inputs 
and also does not seem to apply for our general case of $T>2$.


\begin{algorithm2e}[!h]
\caption{reverse-rank \skim} \label{im:alg}
\DontPrintSemicolon
{\small 
\KwIn{Directed graph $G=(V,E)$, ranker nodes  $Z\subset V$, rankee
  nodes $U\subset V$, threshold $T$, parameter~$K$}
\KwOut{Approximate greedy sequence $\SeedList$}

\BlankLine

\tcp{Initialization}
\lForAll{nodes $u\in V$}{$\Wasdelta{u}.rank \leftarrow \infty$}
\lForAll{rankee nodes~$v\in U$}{$\Size{v} \leftarrow 0$}
$\Index \leftarrow \perp$\tcp*{ Hash map of ranker nodes to sets of
  rankee nodes}
$coverage \gets 0$ \tcp*{Coverage of current $\SeedList$}
$\SeedList \leftarrow \perp$\tcp*{Output list}
$F\gets$ random shuffle of the ranker nodes $Z$\;
\BlankLine
\While(\tcp*[f]{select seed}){$|coverage|<|Z|$ \And  ($\exists$ unscanned $u\in F$ \Or
  $\max_{u \in U}{\Size{u}}> 0$)}{
$x \gets \perp$ \tcp*{next seed node}
\While(\tcp*[f]{Build samples}){$\exists$ unscanned $u\in F$}{
	$u\leftarrow$ next node in shuffled sequence $F$\;
	\BlankLine
    \If(\tcp*[f]{Node $u$ is covered, skip it}){$\Wasdelta{u}.rank <
      \infty$}{\Continue} 
\tcp{Find all rankees $v$ with $\hat{\pi}_{uv}\leq T$}
    Run a Dijkstra search from~$u$ in~$G$, during which\;
    \ForEach{scanned rankee node $v\in U$}{
    	\lIf{$\hat{\pi}_{uv} > T$}{terminate the search}
      $\Size{v} \leftarrow \Size{v} + 1$\;
      $\Index{u} \leftarrow \Index{u} \cup \{v\}$\; \BlankLine
      \If{$\Size{v} = K$}{
        $x \leftarrow v$\tcp*{Next seed node}
        abort sample building loop\;
      }
    }
}

\BlankLine
\If{$x = \perp$}
{$x \leftarrow \arg\max_{u \in U}{\Size{u}}$\;
\lIf{$\Size{x}=0$}{abort main loop}}

\BlankLine
$I_x \leftarrow 0$\tcp*{Estimated coverage of~$x$}

\tcp{Compute $I_x$ and update residual}
	run pruned reverse-rank single-source search from~$x$ in transposed graph~$G^T$, during which\;
	\ForEach{scanned node~$v\in V$ with ($\hat{\pi}_{vu}$,
          $d_{vu}$) }{
		\If{$\hat{\pi}_{vu}  > T$ \Or $\Wasdelta{v}.rank <
                  \hat{\pi}_{vu} $ \Or $\Wasdelta{v}.rank =
                  \hat{\pi}_{vu}$ \And $\Wasdelta{v}.dist \leq
                  d_{vu}$}{\Prune at $v$}
		\If(\tcp*[f]{$v$ is a newly covered ranker}){$\Wasdelta{v} = \infty$ \And $v\in Z$}{$I_x \leftarrow I_x + 1$ ;
                $coverage \gets coverage+1$\;
		\BlankLine 
		\ForAll{nodes $w$ in $\Index{v}$}{
			$\Size{w} \leftarrow \Size{w} - 1$\;
		}
		$\Index{v} \leftarrow \bot$ \tcp*{Delete}
}
		$\Wasdelta{v}.rank \leftarrow \hat{\pi}_{vu}$\;
		$\Wasdelta{v}.dist \leftarrow d_{vu}$\;

	}
\BlankLine
\SeedList.\Append{$x$, $I_x $}\;
}
\Return{\SeedList}\;
}
\end{algorithm2e}

\section{Hardness of exact reverse-rank single source
  computation} \label{hardness:sec}

  Exact single source reverse-rank computation from a node $u$ will return
  $\pi_{iu}$ for all nodes $i$.   Clearly,  it can be solved using an
  APSP computation.  We show the following
\begin{theorem} \label{hardness:thm}
The reverse-rank single source problem has subcubic equivalence to APSP.
\end{theorem}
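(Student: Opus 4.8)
The plan is to establish the two subcubic reductions whose combination is exactly the claimed equivalence. The easy direction, that reverse-rank single source reduces to APSP, is immediate: given all pairwise distances $d_{ij}$ returned by an APSP call, the rank $\overline{\pi}_{ju}=n_j(d_{ju})$ for the fixed source $u$ and every $j$ is obtained by counting, for each $j$, the number of rankees $v$ with $d_{jv}\le d_{ju}$, costing only $O(n^2)$ extra work. For the reverse direction I would \emph{not} reduce APSP directly; instead I would route through the graph radius problem, which is known to be subcubically equivalent to APSP \cite{AGV:SODA2015}. It then suffices to give a subcubic reduction from radius to reverse-rank single source, after which chaining $\text{APSP}\le_3\text{radius}\le_3\text{reverse-rank SS}\le_3\text{APSP}$ yields the equivalence.

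For the core reduction from radius, fix a candidate value $R$ and build an auxiliary graph $G'$ from $G=(V,E)$ by adding one new node $u$ together with a \emph{directed} edge $(v,u)$ of length $R$ from every $v\in V$. Since $u$ is a sink, no new $v$-to-$v'$ path is ever created, so all distances among nodes of $V$ are exactly those of $G$, while $d_{vu}=R$ for every $v\in V$ (the direct edge realizes it, and any other route costs strictly more). Declaring the rankee set to be $U=V\cup\{u\}$ and taking $u$ as the single source, one reverse-rank single-source call returns, for each $j\in V$, $\overline{\pi}_{ju}=n_j(R)=m_j(R)+1$, where $m_j(R)=|\{v\in V : d_{jv}\le R\}|$ and the $+1$ counts $u$ itself at distance $R$. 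A node $j$ satisfies $m_j(R)=n$ precisely when its eccentricity $\max_{v} d_{jv}$ is at most $R$, so the radius $\min_j\max_{v} d_{jv}$ is at most $R$ if and only if $\max_{j\in V}\overline{\pi}_{ju}=n+1$, a value read off the single-source output in $O(n)$ time.

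To recover the radius itself I would binary search on $R$. Under the standard subcubic-equivalence convention of integer weights bounded by $M=\mathrm{poly}(n)$, the radius is an integer in $[1,nM]$, so $O(\log(nM))=O(\log n)$ reverse-rank single-source calls, each on a graph with $n+1$ nodes and $m+n$ edges, pin it down exactly. If reverse-rank single source runs in truly subcubic time $O(n^{3-\delta})$, the whole procedure costs $O(n^{3-\delta}\log n)$, still truly subcubic, establishing $\text{radius}\le_3\text{reverse-rank SS}$ and closing the chain.

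The main obstacle, and the place that needs care, is the interaction between the rank definition and distance ties: when several nodes sit at distance exactly $R$ the range $(\underline{\pi}_{ju},\overline{\pi}_{ju}]$ is nontrivial, so I would commit to the upper endpoint $\overline{\pi}_{ju}=n_j(R)$, which counts exactly the rankees within distance $\le R$ and makes the off-by-one from $u$'s self-count fully predictable. The two remaining points to verify are that the sink construction genuinely leaves every original distance intact (so that the eccentricities of $G$, not of $G'$, are being tested) and that a logarithmic number of calls multiplied into a subcubic per-call bound does not break subcubicity—both of which hold under the polynomially bounded integer-weight assumption used throughout this literature.
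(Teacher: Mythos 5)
Your proof is correct and follows essentially the same route as the paper: both reduce from the graph radius problem (subcubic-equivalent to APSP) by attaching an apex node $u$ at distance $x$ from every vertex, using one reverse-rank single-source call from $u$ to test whether some vertex ranks $u$ last, and binary searching on $x$. Your variant is marginally tidier---the directed sink edges make it immediate that original distances are preserved, and committing to $\overline{\pi}$ turns the test into a clean monotone predicate, whereas the paper's lemma only distinguishes $R>x$ from $R<x$---but the underlying construction and argument are the same.
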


 We give a  reduction from
  the
{\em Graph Radius Problem}:
The radius of a graph $G$, is defined as the minimum over nodes $u$ of
the maximum distance from $v$ to another node
\begin{equation} \label{radius:eq}
R = \min_{u \in V} \max_{v \in V} d_{uv}\ . 
\end{equation}
The graph radius problem on undirected  graphs
is known to have subcubic equivalence to
APSP \cite{AGV:SODA2015}.  

Given a graph $G=(V,E)$, and a length parameter $x$, 
we construct a new graph $G_x=(V',E')$ by
adding a new node $V' = V\ \cup \{u\}$ and adding edges from $u$ to
all $v\in V$ with length $x$. 
\begin{lemma}
Let $G$ be a graph with radius $R$ and consider $x>0$.
If $R> x$ then in $G_x$,   for all nodes $v\in V$,
$\pi_{vu} < |V|$.   If $R<x$ then there must exist a node $z\in V$ such
that $\pi_{zu}= |V|$ in $G_x$.
\end{lemma}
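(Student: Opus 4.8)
The plan is to show that, in $G_x$, the reverse rank $\pi_{vu}$ is nothing more than a neighborhood count in the original graph $G$, and then to read off the two implications directly from the definition of the radius \eqref{radius:eq}.

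First I would compute distances in $G_x$. Every path from a node $v\in V$ to $u$ must finish with one of the new edges, each of length $x$, and the direct edge $(v,u)$ already attains this, so $d_{vu}=x$ for all $v\in V$. For a pair $v,w\in V$, a shortest path either stays inside $G$ or detours through $u$ at cost $2x$, so its length in $G_x$ is $\min\{d^{G}_{vw},2x\}$; since $2x>x$ this yields the key equivalence $d^{G_x}_{vw}\le x \iff d^{G}_{vw}\le x$. Evaluating the rank of $u$ as a neighborhood size (as in \eqref{ntopi}) and using $d_{vu}=x$, the nodes counted towards $\pi_{vu}$ are exactly $u$ together with the nodes of $V$ within distance $x$ of $v$ in $G$. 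Writing $n^{G}_{v}(x)=|\{w\in V\mid d^{G}_{vw}\le x\}|$, this gives the identity $\pi_{vu}=n^{G}_{v}(x)$ (using the self-excluded rank convention of the introduction; other conventions change this only by the $\pm1$ / tie bookkeeping noted below).

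Granting this identity, the two cases follow immediately. If $R>x$, then by \eqref{radius:eq} every node $v$ has eccentricity exceeding $x$, so some node of $V$ lies strictly beyond distance $x$ from $v$; hence $n^{G}_{v}(x)\le |V|-1$ and $\pi_{vu}<|V|$ for all $v\in V$. If $R<x$, then some center $c$ has $\max_w d^{G}_{cw}=R<x$, so every node of $V$ lies within distance $x$ of $c$; hence $n^{G}_{c}(x)=|V|$ and the node $z=c$ satisfies $\pi_{zu}=|V|$.

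The one point I would handle with care, and the only real obstacle, is the tie/self bookkeeping in the rank: because $u$ is at distance exactly $x$ from every $v$, any node of $V$ at distance exactly $x$ is tied with $u$, and whether $v$ itself and such ties are counted depends on the chosen convention ($\overline{\pi}$, $\underline{\pi}+1$, self-excluded, etc.). What makes the argument robust is that both hypotheses are \emph{strict}: when $R>x$ we exhibit a node strictly beyond $x$, and when $R<x$ all center distances are strictly below $x$, so no distance-$x$ tie ever affects either bound. The two conclusions therefore hold verbatim under every such convention, with only the intermediate constant shifting by one; the borderline $R=x$ is deliberately left out, which is precisely the gap a binary search over $x$ does not need.
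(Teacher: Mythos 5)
Your proof is correct and follows essentially the same route as the paper's: identify the center/eccentricity dichotomy from \eqref{radius:eq} and observe that $u$, sitting at distance exactly $x$ from every $v\in V$, is strictly the farthest node from a center when $R<x$ and never the strictly-farthest when $R>x$. You additionally make explicit two points the paper leaves implicit --- that distances inside $V$ become $\min\{d^{G}_{vw},2x\}$ in $G_x$ (so the comparison against $x$ is unaffected) and that the strictness of both hypotheses makes the conclusion independent of the tie-breaking convention for $\pi$ --- which is careful bookkeeping rather than a different argument.
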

\begin{proof}
Suppose that $R>x$ then 
by definition of radius, for all nodes $z$,
$\max_{v \in V} d_{zv}  \geq R > x$.  Therefore node $u$ will not be
the farthest from $z$ and we have 
$\pi_{zu} < |V|$.

Suppose now that $R<x$. 
let  $z\in V$ be such that $R=\max_{v \in V} d_{zv}$.
 Then for all $v\in V$,  $d_{zu}=x > R \geq d_{zv}$ and thus $\pi_{zu}
 = |V|$.
\end{proof}

  From the lemma, we can compute the graph radius $R$ by performing a
  logarithmic number (in the representation of $G$) of exact reverse-rank single source computations on
  graphs the size of $G$.  This concludes the proof of Theorem \ref{hardness:thm}.

\begin{corollary} \label{hardnessinfluence:coro}
Exact computation of reverse-rank influence, even with a single seed
node $|S|=1$, uniform $\beta$, and a threshold function $\alpha$, is
sub-cubic equivalent to APSP.
\end{corollary}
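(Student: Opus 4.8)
The plan is to reuse the very reduction graph $G_x$ and the radius dichotomy established for Theorem~\ref{hardness:thm}, observing that the single-seed threshold influence of the added node $u$ already encodes whether $R>x$ or $R<x$. The point is that we never need the individual reverse ranks $\pi_{vu}$ supplied by single-source computation; their aggregate \emph{count} alone suffices, which is exactly what makes the (seemingly weaker) influence problem just as hard.

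First I would specialize the definitions to the regime named in the statement: take the seed set $S=\{u\}$, uniform weights $\beta\equiv 1$, and let $\alpha$ be the threshold function with parameter $T=|V|-1$. Then $\Inf(\{u\}) = |\{z\in Z \mid \pi_{zu}\leq |V|-1\}|$, the number of rankers for which $u$ lands among their $|V|-1$ nearest rankees. Taking the rankers to be the original nodes $V$, this count is at most $|V|$, and it attains its maximal value $|V|$ precisely under the \emph{universal} condition that every $v\in V$ satisfies $\pi_{vu}\leq |V|-1$.

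Next I would invoke the lemma used in the proof of Theorem~\ref{hardness:thm}. When $R>x$ that lemma gives $\pi_{vu}<|V|$ for \emph{all} $v$, so every ranker meets the threshold and $\Inf(\{u\})=|V|$. When $R<x$ the lemma produces a node $z$ with $\pi_{zu}=|V|>T$, so that ranker fails the threshold and $\Inf(\{u\})<|V|$. Hence a single exact evaluation of the single-seed threshold influence on $G_x$ reveals on which side of $x$ the radius lies, and a logarithmic number of such evaluations (in the bit-representation of $G$, exactly as in the proof of Theorem~\ref{hardness:thm}) pins down $R$ by binary search on $x$.

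Finally I would close the equivalence. Each influence evaluation runs on a graph with $|V|+1$ nodes and $|E|+|V|$ edges, and only logarithmically many are needed, so any $O(n^{3-\epsilon})$ algorithm for exact single-seed threshold reverse-rank influence would yield a subcubic algorithm for graph radius, which is subcubic-equivalent to APSP~\cite{AGV:SODA2015}. The reverse direction is immediate, since from APSP one reads off all distances, hence all ranks $\pi_{zu}$ and the count defining $\Inf(\{u\})$. The one step to get right is the threshold bookkeeping---confirming that the influence count is maximal exactly under the universal condition $\forall v\,\pi_{vu}\leq T$ and strictly drops under the existential condition $\exists z\,\pi_{zu}=|V|$---since this is the bridge that converts the per-node guarantees of the lemma into a statement about the aggregate count; the boundary case $R=x$ can simply be sidestepped by probing strictly-between candidate values of $x$.
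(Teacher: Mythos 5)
Your proposal is correct and follows essentially the same route as the paper: the paper's own (one-line) proof likewise reuses the $G_x$ construction from Theorem~\ref{hardness:thm} and evaluates the threshold influence of $u$ with $T=|V|-1$, so that the radius dichotomy of the lemma translates directly into whether the influence count equals or falls below $|V|$. You have merely spelled out the bookkeeping that the paper leaves implicit.
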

\begin{proof}
We use the same construction and compute influence (centrality) of
node $u$ with $\alpha$ being a threshold function with $T=|V|-1$.
\end{proof}

\ignore{
\section{Score aggregation}

  We can select a small set $S$ of important or  high 
  influence seed nodes.  We can compute reverse-ranks for these beacons,
  obtaining for all nodes in the graph how they rank the beacon nodes.
The rank with respect to the beacons can be used to ``position'' a
node in the network.  Similarly,  we can evaluate similarity of two
nodes based on the similarity of how they rank the beacons.
}

\ignore{
\section{\rD -rank oracle} \label{oracle:sec}

  ADS sketches also can be used to obtain upper and lower bounds on
  distances \cite{CDFGGW:COSN2013}.  When the graph is undirected,
  they obtain probabilistic approximation guarantees close to the
  worst-case best.  In practice, approximation is much tighter.
  Moreover, a better approximation corresponds to more short paths between
  the nodes, which correlates well with a tighter relation we expect in
  social graphs.

  We can obtain bounds on the \rD -rank of $i$ with respect to $j$, $\pi_{ij}$ by estimating the
  distance $d_{ij}$.  Upper (lower) bounds on the distance provide us with
  upper (lower) bounds on the rank. 

  Moreover, having the ADS and the distribution, we can also work with
  ranking that is robust to small perturbations in weights.  We can
  take the rank as the median rank of nodes that have the same
  approximate distance as $j$ from $i$.
}

\section{Experimental evaluation} \label{experiments:sec}

  We implemented and evaluated our algorithms for computing ADS,  approximate reverse-rank
  single-source,  and influence maximization.
  Our implementations are  in C++ and were compiled using gcc (g++) with full
  optimization.  Our testing machine runs Centos 6.5 and uses 
Dell PowerEdge R720 server with two Intel Xeon E5-2640 CPUs.  Each
with 12 cores  (2.50GHz, 12$\times$32kiB
L1,  6$\times$256kib L2, and 15MiB L3) 
and 264GiB of RAM. The disk capacity is 1T.

  Table~\ref{ADStable} shows the social graphs used for our evaluation, all taken from the SNAP
  project \cite{SNAP}. For each graph we list the number of nodes
  and edges and whether edges are directed.  These data sets did not
  distinguish between edges, so we used uniform lengths of 1.  Our
  implementations, however, are designed to work with general positive
  edge lengths. 

\subsection{Sequential ADS computation}
  Table~\ref{ADStable} also lists, for each instance, performance
  figures (time and memory usage) of our optimized
 sequential implementation of {\em Pruned Dijkstras} (Algorithm
 \ref{pDijkstra:alg}).  
We list performance for ADS parameter values
  $k=16,64,128$  (higher $k$ implies larger sketch size and
  processing and higher estimation quality).  The listed times are
  broken into {\em load time} -- loading the graph into  memory data
  structures, {\em ADS time} -- computing the sketches,
  and {\em ests time}  -- process the ADS sketches to compute the
  distance to cardinality estimation lists $L(i)$.
We can see that ADS computation was the dominant component.
Overall, the preprocessing time is of the order of few hours, even on our largest data
set. The table also lists the virtual memory usage of the different runs.
For reference, we provide in Table~\ref{KNN:table} the running time of computing the $T$
nearest neighbors for all nodes in the graph for $T=16,64,128$.  We
can see that our ADS computation times are comparable with this
simpler operation.

\begin{table*}[!t]
\caption{Test Instances and preprocessing  time (single thread) \label{ADStable}}
\centering
{\footnotesize
\begin{tabular}{lrrrrrrrrrrrr}
\hline
 &  &  &  &
                                            \multicolumn{3}{c}{\underline{Preprocess 
            ($k=16$)}}  & \multicolumn{3}{c}{\underline{Preprocess
                         ($k=64$)}
} &
                                                                  \multicolumn{3}{c}{\underline{$\quad$
    Preprocess
                                                                  ($k=128$)
    }}
  \\
 instance &  &  & load &
 ADS & mem&  ests &
 ADS & mem&  ests &
 ADS & mem&  ests \\
(un/directed) & \#nodes & \#edges  & [Sec] 
& [Sec] &[GB] &  [Sec] &
 [Sec] & [GB] &  [Sec] &
 [Sec] & [GB] &  [Sec] \\
\hline
\instance{Facebook} (u)	 & 3,959	 & 84,243	&  0.3	&  0.5	&  0.03	&  0.043 
&  0.7	 & 0.03	&  0.140	&  1.4	&  0.04	&  0.341 \\
 \instance{Slashdot} (d)	 & 77,360	&  905,468	&  0.5	&  5.7	&
 0.1	&  0.276	&  23 & 	 0.22 & 	 1.91 & 	 42 
 &  0.4	&  5.35\\
 \instance{Twitter} (d)	&  456,626	&  14,855,842	&  22 & 	 88 &
 0.9 & 	 2.00	&  338	&  1.4	&  13.5	&  523	&  2 
 &  39 \\
 \instance{YouTube} (u)	&  1,134,890	&  2,987,624	&  3.3 & 	 116 & 	 1.4 &
 6.69	&  404	&  3.3	&  41.6	&  770 & 	 5.3	&
 118 \\
\instance{LiveJournal} (u)&  3,997,962	&  34,681,189	&  32	&  1481	&  6.4 
 & 29.7	&  4,901	&  18	&  209  & 	 9,555 & 	 31.5	&
 642\\
\hline
\end{tabular}
}
\end{table*}
\ignore{
\instance{Facebook} (u)	 & 3,959	 & 84,243	&  0.2	&  0.3	&  0.03	&  25 
&  0.9	 & 0.03	&  122	&  1.7	&  0.04	&  362 \\
 \instance{Slashdot} (d)	 & 77,360	&  905,468	&  0.5	&  0.9	&
 0.1	&  364	&  29.7 & 	 0.22 & 	 1,930 & 	 58.9 
 &  0.4	&  5,558\\
 \instance{Twitter} (d)	&  456,626	&  14,855,842	&  21 & 	 191 &
 0.9 & 	 4,514	&  414	&  1.4	&  15,444	&  776	&  2 
 &  40,409 \\
 \instance{YouTube} (u)	&  1,134,890	&  2,987,624	&  3 & 	 157 & 	 1.4 &
 10,530	&  450	&  3.3	&  43,093	&  945 & 	 5.3	&
 120,726 \\
\instance{LiveJournal} (u)&  3,997,962	&  34,681,189	&  30	&  1959	&  6.4 
 & 52,015	&  5,425	&  18	&  213,331 & 	 10,437 & 	 31.5	&
 635,173\\
}

\subsection{Multithreaded ADS algorithm}
 We next evaluate our implementation of the multithreaded ADS 
algorithm (Section \ref{multithread:sec}). 
The evaluation was done by generating  1 to 14 concurrent threads. 
We used batch size
parameter $\mu=0.1$.   The parameter $\mu=0.1$ was selected since it
had the best performance on a test of sweeping $\mu$ between $0.05$
and $1$ and considering 1--14  threads on the \instance{slashdot}
graph.
We note that the amount of concurrency provided in the algorithmic design is 
much larger, but the architecture of our machine, mainly number of
cores and shared caches, limited the benefit of using more threads.
We show the results for executions with ADS parameter $k=16$.
 The time to load the graph into memory and the total virtual memory 
 used did not vary much for the same instance and different numbers of 
 threads.  Table \ref{multi:table} lists the load-time and virtual 
 memory numbers for 7 threads. 
The table also shows the run time on a single thread. Note that it can be
slightly larger  than our optimized sequential implementation.  Figure 
\ref{multithread:fig} shows the running times, as a fraction of the
running time on a single-thread,  as a function of the number of 
concurrent threads.  We note that the number of threads listed is the
concurrency generated by our program scheduler --
the actual number of cores allocated by the OS was sometimes smaller 
and we had no control over it. 
We observe significant benefit of the multithreaded design, in
particular for the larger graphs where we obtain up to a factor of
$3$ speedup, also with respect to the optimized sequential implementation.  
  We note that most of the speedup is obtained using only
$2-6$ threads.

\begin{table}[H]
\centering 
{\small
\caption{Multithreaded ADS computation\label{multi:table}}
\begin{tabular}{lrrr}
   & load  & Memory & 1-thread \\
instance &  [Sec] &  [GiB] & [Sec]\\
\hline 
\instance{Facebook} & 0.13 & 0.56  & 0.69 \\
 \instance{Slashdot} & 0.57 & 1.1 & 10.0  \\
\instance{Twitter} & 23 & 2.2 & 120 \\
\instance{YouTube} & 3.9 & 3.3 & 157 \\
\instance{LiveJournal} & 36 & 11 & 1541 
\end{tabular}
}
\end{table}
\begin{figure}
\centering 
\includegraphics[width=0.45\textwidth]{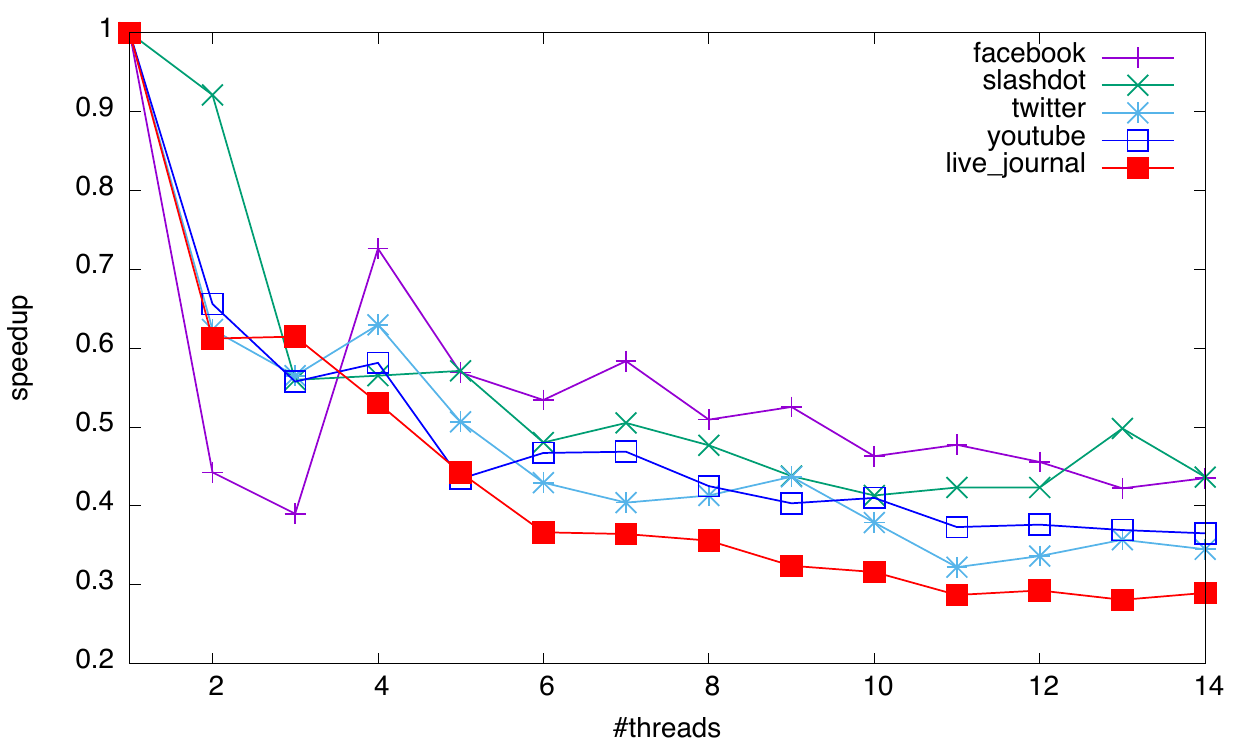}%
\caption{\label{multithread:fig}{Multithreaded computation. Speedup 
    (ratio of time to single-thread time) as a function of the number 
    of threads.}}
\end{figure}

\subsection{Reverse-rank single-source computation}
Table~\ref{SingleSourcetable}  shows running times of our approximate
reverse-rank single source computations, averaged over 1000 source nodes
selected uniformly at random.  The times listed are net per
computation after loading the graph and pre-computed sketches $L(i)$
into memory. We show running times for the different ADS
parameter values $k=16,64,128$.  For reference, we  also show
running time for Dijkstra's algorithm (single-source distances)
averaged over the same 1000 source nodes.  We can observe that the
running times of our reverse-rank single-source computation do not
depend on the sketch parameter $k$ and are similar to Dijkstra
computations.
The table also shows extrapolated running 
time for APSP computation.  The extrapolation was obtained by multiplying the time 
for a single Dijkstra run by the number of nodes.  This is listed for 
reference, since exact  reverse-rank single-source computation is 
equivalent to an all-pairs computation.

The table also displays the average relative errors (ARE) for each
sketch parameter.   Since it was not possible to scalably compute the
exact reverse-rank values even for a single source,  we computed instead the
estimation errors on the ranks using the Dijkstra runs:   The errors were therefore
averaged over all the ranks provided by 1000  different rankers
instead of all the ranks ``received'' by 1000 different rankees.
We can see that the ARE, as 
expected, decreases with the ADS parameter $k$ and are within
the theoretical bounds.  Note that a fixed set of
sketches was computed during preprocessing and used in  all subsequent
computations.  Therefore the estimates on reverse-ranks of different source-destination pairs
can be highly dependent.

{\small
\begin{table*}
\caption{Reverse-Rank Single Source computation, averaged over 1000 source nodes
  selected uniformly at random. \label{SingleSourcetable}}
\centering
\begin{tabular}{lcccccccc}
\hline
& \multicolumn{6}{c}{\underline{$\quad$ reverse-rank single source
    $\quad$}} & Dijkstra  & APSP \\
& \multicolumn{2}{c}{\underline{$\quad k=16 \quad$}} &
                                                       \multicolumn{2}{c}{\underline{$\quad
                                                       k=64 \quad$}} &
                                                                       \multicolumn{2}{c}{\underline{$\quad
                                                                       k=128
                                                                       \quad$}}
& & \\
instance & [Sec] & ARE & [Sec] & ARE & [Sec] & ARE & [Sec] &  $\approx$ [hours] \\
\hline
\instance{Facebook} & 0.006 &  0.11 & 0.006 & 0.072 & 0.006& 0.067 & 0.002  &	 0.002\\
 \instance{Slashdot} &0.071 & 0.34  &	 0.077	& 0.12 &  0.074 & 0.076
 &  0.06 & 1.3 \\
\instance{Twitter} &0.96 & 0.29	& 1.00 & 0.050	& 0.88 & 0.076 & 0.54 & 68 \\
\instance{YouTube} & 1.44 & 0.24 &	 1.51	&0.041 & 1.34 & 0.023 & 0.98& 309 \\
\instance{LiveJournal} & 13 &  0.11 & 16 & 0.13 	& 16 & 0.053	& 5.8 & 6440\\
\hline
\end{tabular}
\end{table*}
}
\ignore{
\instance{Facebook} & 0.004 &  & 0.004 & & 0.005& & 0.0033  &	 \\
 \instance{Slashdot} &0.12 & &	 0.1	& &  0.1 & &  0.0805 & \\
\instance{Twitter} &1.1& 	& 1.1& 	& 1.3 & & 0.78 & \\
\instance{YouTube} & 2.1 & &	 2.1	& & 2.2 & & 1.33& \\
\instance{LiveJournal} & 14.2 & & 15.2& 	& 17.8 & 	& 7.891 & \\

}


\subsection{Reverse-rank distributions}
  Our implementation allowed us, for the first time, to view the
  reverse rank distributions of  nodes in a large network.
Figure \ref{reverseranks:fig} (left) shows the cummulative reverse rank
distributions $\overline{\pi}_{js}$ for 4 selected source nodes in the
\instance{YouTube} network.  For each node $s$, we sort the
(estimated) values $\overline{\pi}_{js}$ for all nodes $j$  in
increasing order.  The cummulative distribution plot then shows the
value $y$ at each position $x$.
The figure also includes a 
reference line where for any $i$ there 
is a node with rank $i$.  The reference line is in a sense corresponds
to an  ``average''  source node, which gives and receives the same influence.

We can get information on the relative importance of a node in its
``locality,''  for varying locality ranges,  from its reverse-rank
distribution.
Nodes that are important in their locality 
 would have distributions that at least initially lie well 
below the reference line.  
This means that for some $i$,
there are many more than $i$ nodes that rank them below $i$.
Node \#2711 and \#480 are example influential nodes that remain
important across neighborhood scales.
Node \#368749 has low influence with distribution  above the
reference line across ranges.
 Node \#3394 has low influence on most ranges except for its immediate
 neighborhood, where it has average influence,  and on the longest
 scale, when looking at its 
$7\times 10^5$ and above highest rankers (which is 35\% of total nodes), which indicates that
it lies closer to the ``core'' of the network.
Note that we plot $\overline{\pi}$, meaning ties are broken
``upwards,'' which biases towards being above the reference line.

Figure~\ref{distdist:fig} (right)  provides, for comparison,  the cummulative distance
distributions for the same nodes: For each number of hops $y$, we see the
number $x$ of nodes within $y$ hops.   
The distance distribution captures the expansion rate,  but does not quantify well the
relative status of a node within its locality:
A less influential  member of a dense community would have higher
expansion than an influential member of a sparser community.
As a simplified example think of  two nodes $A$ and $B$ with the same
degree $\Delta$ such that all neighbors of $A$ have degree $\ll
\Delta$ and all neighbors of $B$ have degree $\gg \Delta$.  In this
case we may view $A$ as being influential in its neighborhood whereas
$B$ will not be.  The reverse-rank distribution will correctly make
this distinction whereas the distance-distribution will not.

\notinproc{
\begin{figure*}
\centering 
\includegraphics[width=0.45\textwidth]{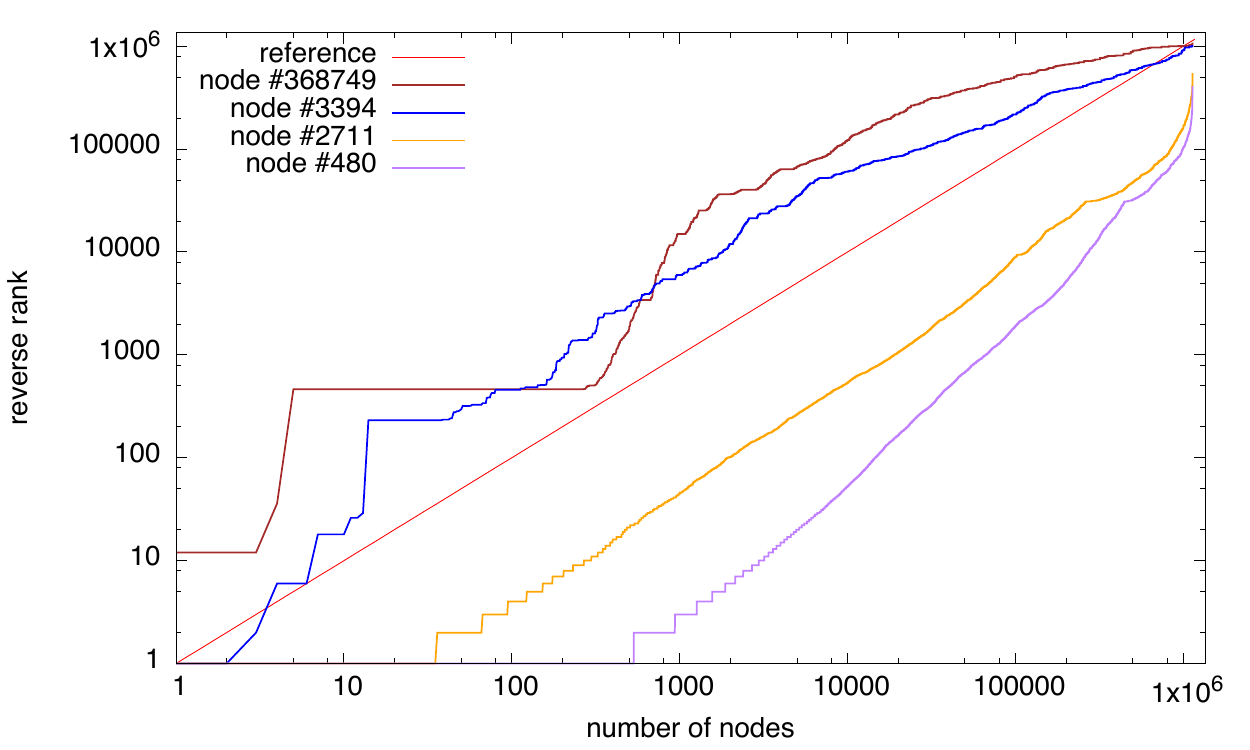}%
\includegraphics[width=0.45\textwidth]{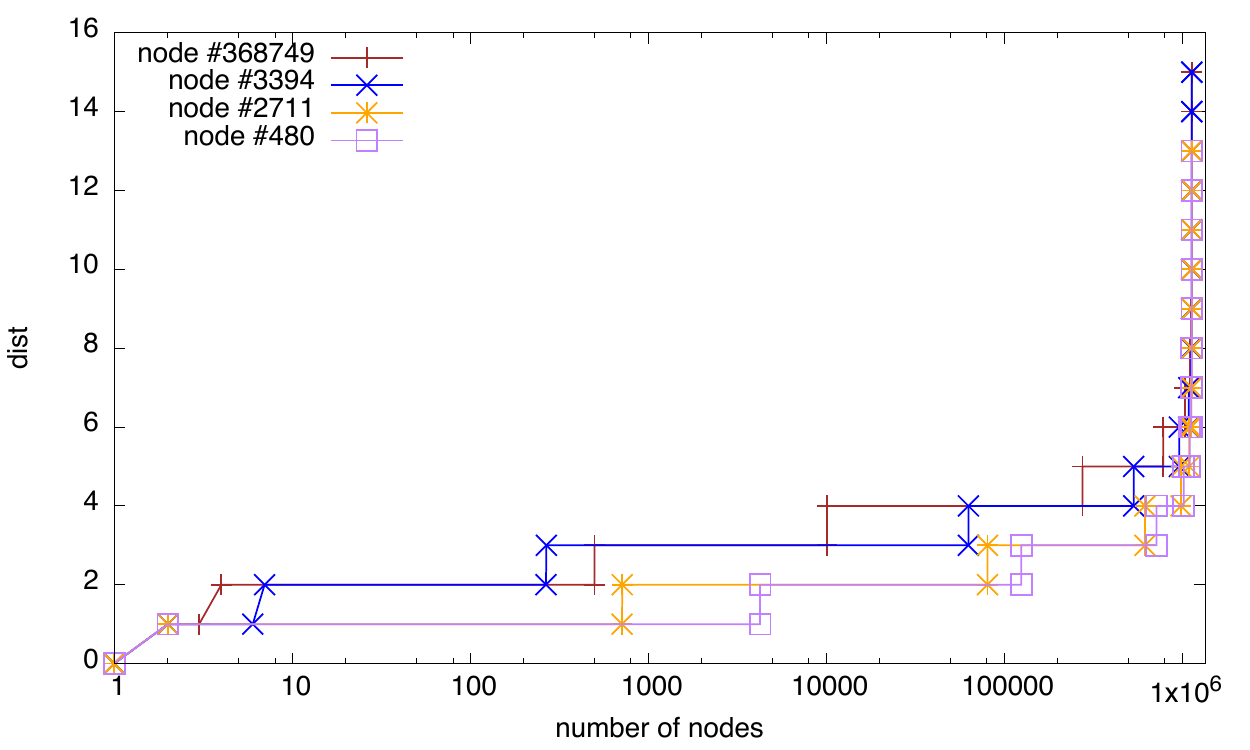}%
\caption{\label{reverseranks:fig}\label{distdist:fig}{Cummulative distributions on \instance{YouTube} graph. Left:  Reverse-rank Right: Distance.}}
\end{figure*}
  }

\onlyinproc{
\begin{figure}
\centering 
\includegraphics[width=0.22\textwidth]{reverserankdistF.pdf}%
\includegraphics[width=0.22\textwidth]{distdist.pdf}%
\caption{\label{reverseranks:fig}\label{distdist:fig}{Cummulative distributions on \instance{YouTube} graph. Left:  Reverse-rank Right: Distance.}}
\end{figure}
  }

\subsection{Influence maximization}
\notinproc{
\begin{figure*}
\centering 
\begin{tabular}{c}{\includegraphics[width=0.30\textwidth]{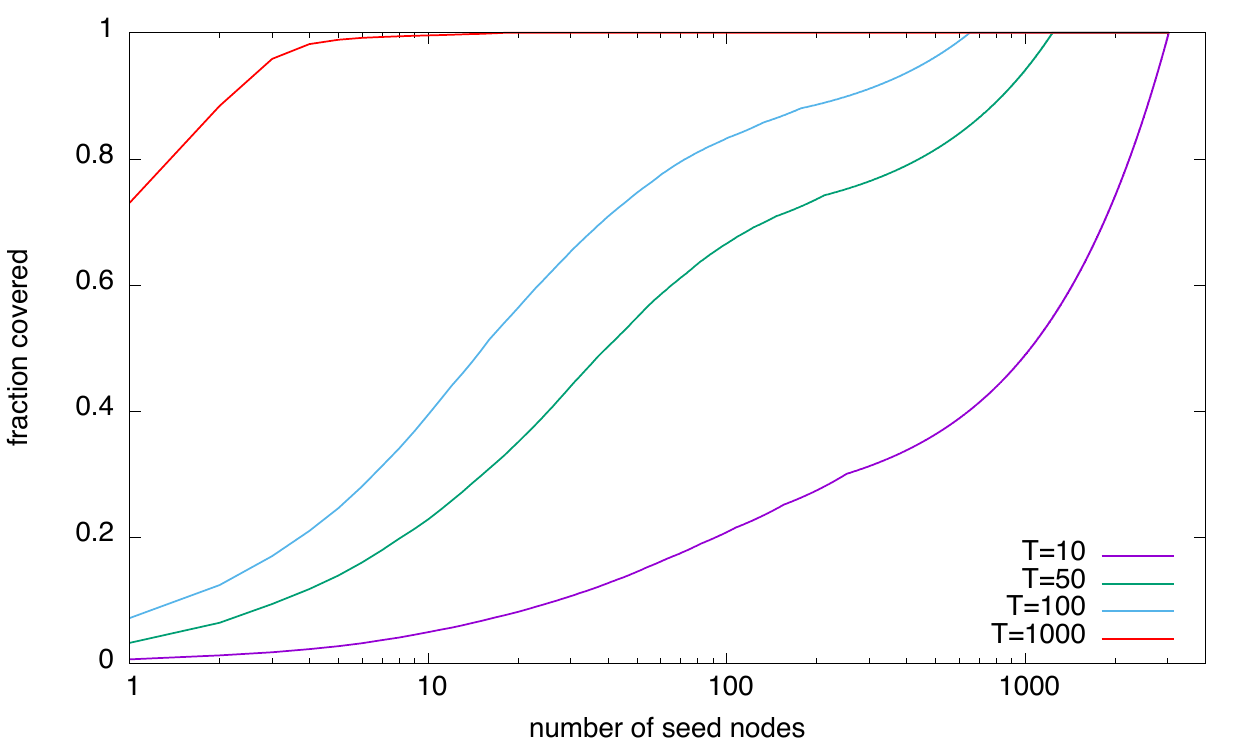}}\\{\small
  \instance{Facebook}}\end{tabular}%
\begin{tabular}{c}\includegraphics[width=0.30\textwidth]{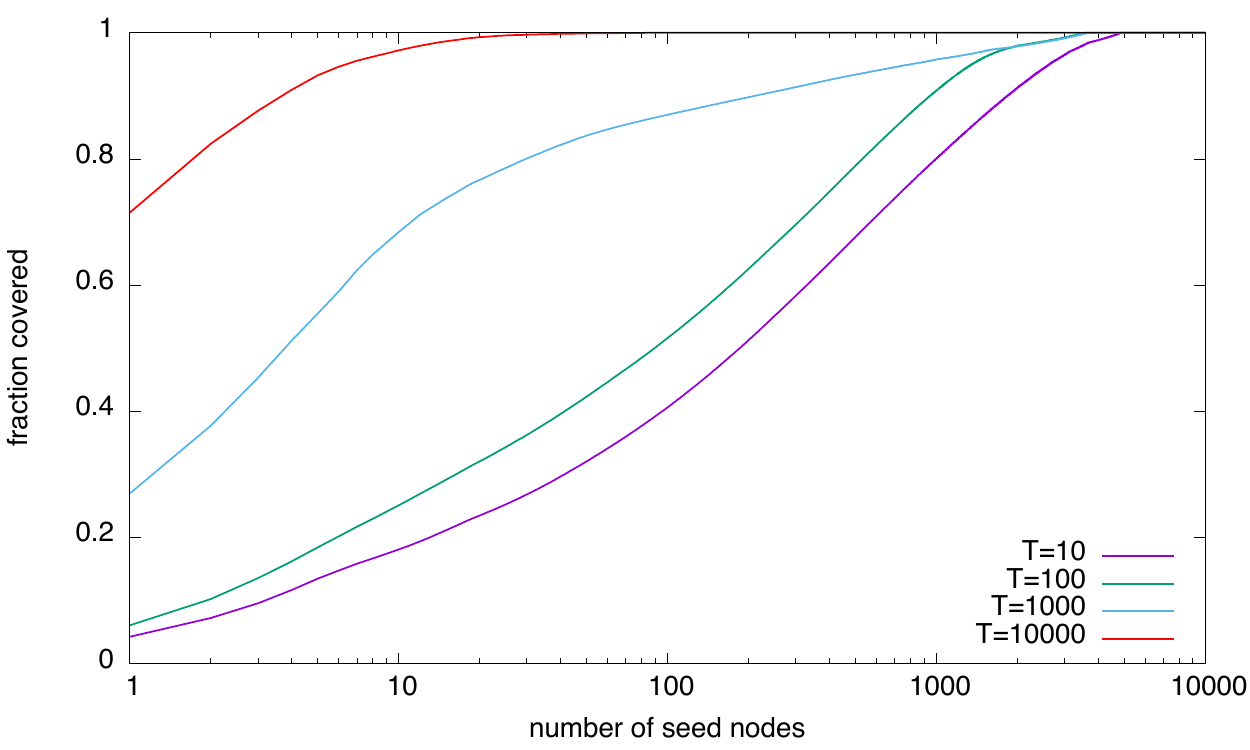}\\{\small
  \instance{Slashdot}}\end{tabular}%
\begin{tabular}{c}\includegraphics[width=0.30\textwidth]{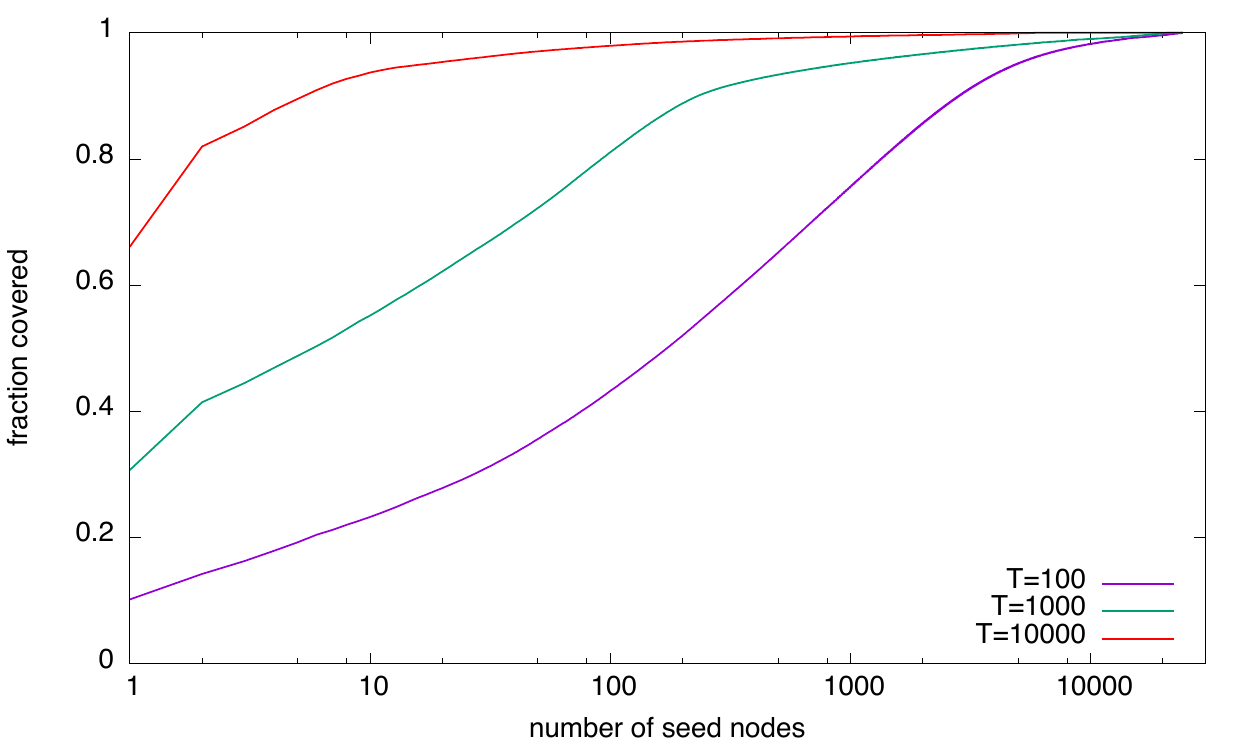} \\{\small
  \instance{Youtube}}\end{tabular}\\
 \begin{tabular}{c}
 \includegraphics[width=0.30\textwidth]{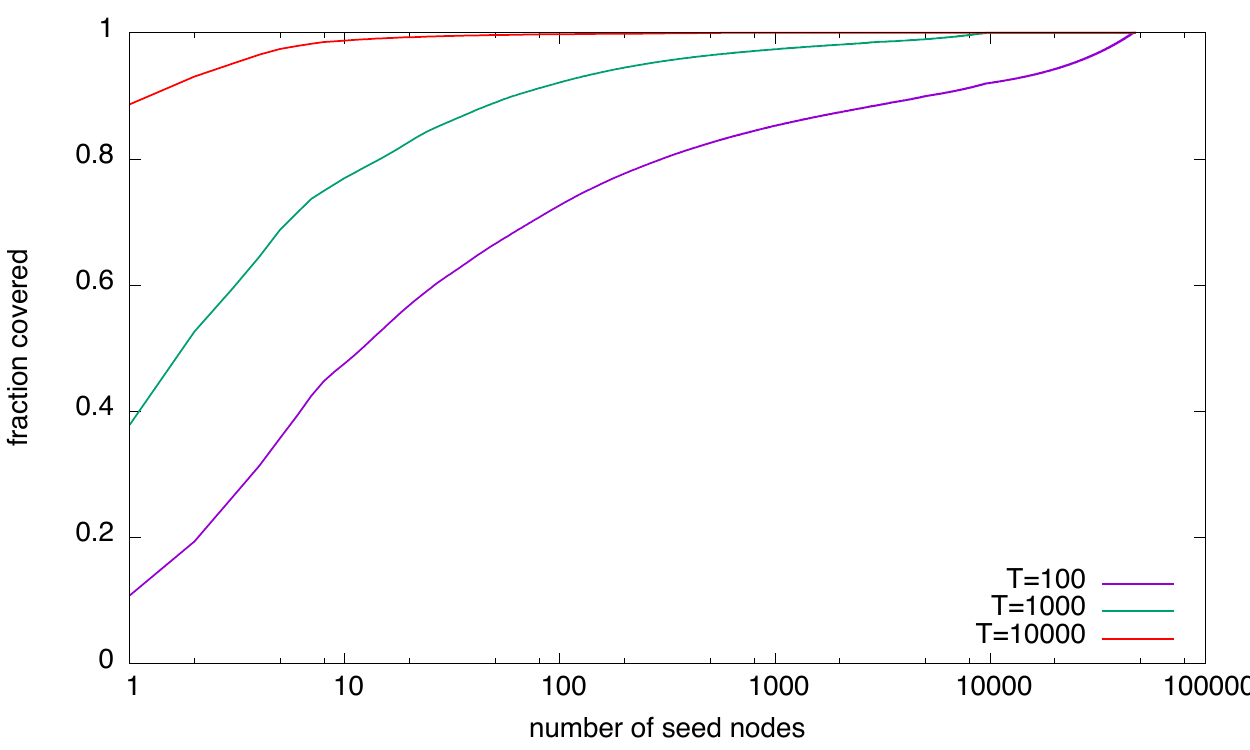}\\
  {\small    \instance{Twitter}}
  \end{tabular}%
 \begin{tabular}{c}\includegraphics[width=0.30\textwidth]{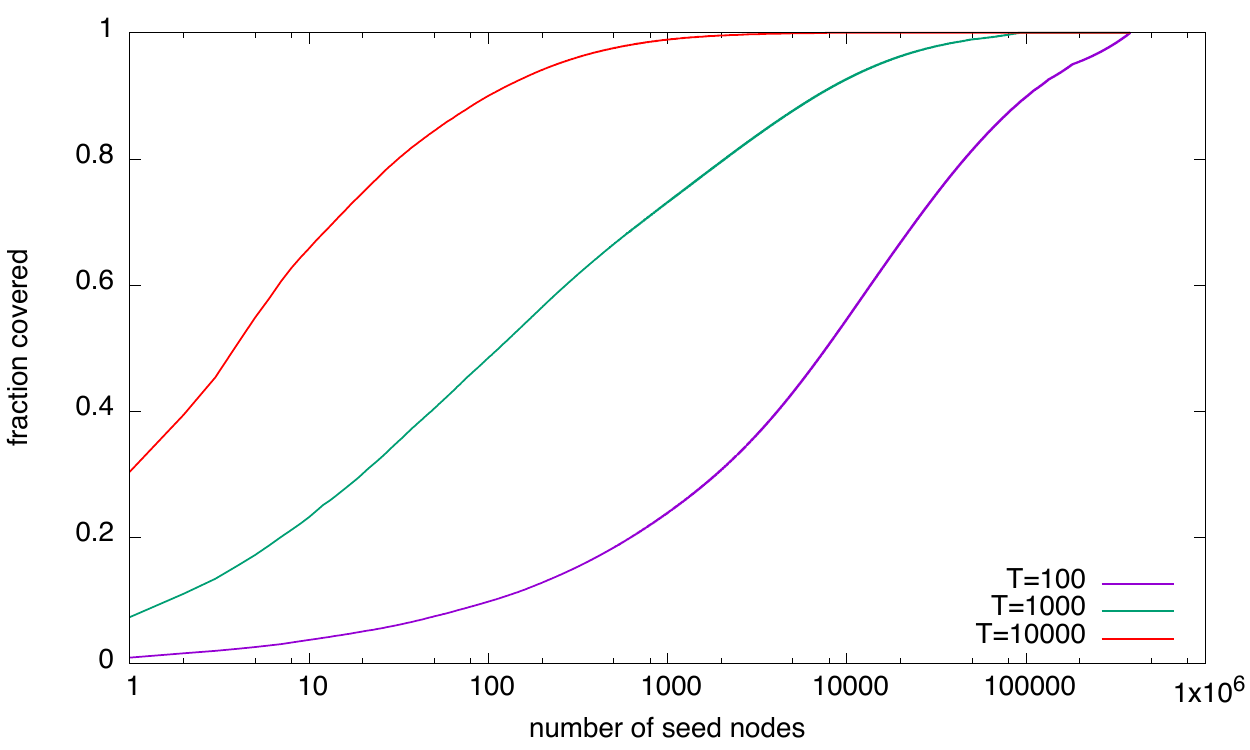}\\{\small
   \instance{LiveJournal}}\end{tabular}%
\caption{\label{coverageT:fig}}{Fractional coverage $\widehat{\Inf(S)}/|V|$ as a
  function of seed set size $|S|$ for the greedy sequence, for different values of $T$.}
\end{figure*}
}
\onlyinproc{
\begin{figure}[t]
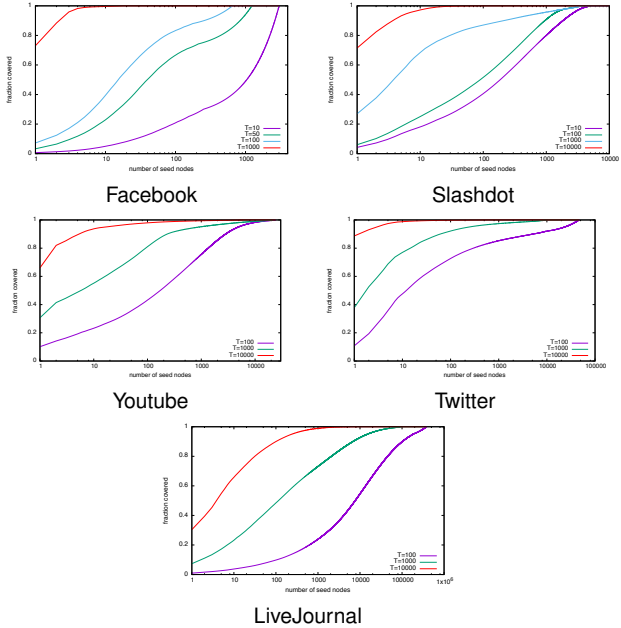

\centering 
\begin{tabular}{c}{\includegraphics[width=0.22\textwidth]{coverageT-facebook.pdf}}\\{\small
  \instance{Facebook}}\end{tabular}%
\begin{tabular}{c}\includegraphics[width=0.22\textwidth]{coverageT-slashdot.pdf}\\{\small
  \instance{Slashdot}}\end{tabular} \\
\begin{tabular}{c}\includegraphics[width=0.21\textwidth]{coverageT-youtube.pdf} \\{\small
  \instance{Youtube}}\end{tabular}
 \begin{tabular}{c}
 \includegraphics[width=0.21\textwidth]{coverageT-twitter.pdf}\\
  {\small    \instance{Twitter}}
  \end{tabular}
 \begin{tabular}{c}\includegraphics[width=0.22\textwidth]{coverageT-livejournal.pdf}\\{\small
   \instance{LiveJournal}}\end{tabular}%
\caption{\label{coverageT:fig} Fractional coverage $\widehat{\Inf(S)}/|V|$ as a
  function of seed set size $|S|$ for $\rD$-\skim, for varying $T$.}
\end{figure}
}
  We next evaluate the performance of reverse-rank \skim, in terms of
both the quality of the coverage and the running time.  The evaluation used the
social graphs listed in Table~\ref{ADStable}, with uniform edge
lengths and  all nodes being both rankers and rankees.
We used the $\pi\equiv 
\overline{\pi}$ \eqref{ntopi} interpretation of rank.
  We study dependence on
three parameters, $T$, $k$, and $K$:  
The threshold value $T$ specifies the
coverage rate.
The ADS sketch  parameter $k$
determines the quality of 
$\hat{\pi}$ as estimates of the true ranks $\pi$, and thus, the relation
between $\widehat{\Inf}$ \eqref{infhat:eq}, which we optimize for, and the true $\Inf$.
Finally, the sample size parameter $K$
determines the quality of the coverage
 in terms of the approximate influence   $\widehat{\Inf}$
\eqref{infhat:eq}.  Larger $K$ mean that we are more likely to
select seeds with marginal $\widehat{\Inf}$ influence that is closer
to the maximum.  

Recall that the computation of an exact greedy sequence, with
respect to either $\widehat{\Inf}$ or $\Inf$ (Algorithm~\ref{exactgreedyim:alg}), is $\tilde{O}(T |E|)$,
where the $\tilde{O}$ notation suppresses logarithmic factors.  
The computation of reverse-rank \skim\ uses ADS computation of
$\tilde{O}(k |E|)$ and additional computation with worst-case bound of 
$\tilde{O}(K |E|)$. Moreover, note that in actuality, the time is
$\tilde{O}(K|E|\rho)$, where $\rho$ is the ratio between the
average and maximum influence of a node, and for typical skewed
influence distribution, we have $\rho\ll 1$.   Therefore, we expect the
scalability advantage of \skim\ to become more significant for larger $T$.

 When evaluating the effect of the sample size $K$,
we fixed the  ADS 
 parameter to be $k=128$ for the four smaller graphs and $k=64$ for the 
 larger one and used  several values of $T \leq 10^4$.
We computed the exact greedy selection with respect to
$\widehat{\Inf}$, which is obtained by selecting a node with maximum
marginal $\widehat{\Inf}$ in each step.  This was done on 
the four smaller graphs.  On these graphs, sample size $K=100$ was almost
   always within a fraction of a percent of the exact greedy
   coverage.  The one exception was on 
 \instance{Slashdot} and $T=10^4$ where the first seed had coverage
 that is 6\% lower than the optimal one and the gap closed up for the
 first three seeds.
With $K=500$, the approximate greedy coverage almost exactly
   matched the exact greedy coverage.  For \instance{LiveJournal}, we
   only evaluated the coverage for sample size up to $K=500$, but the
   performance with $K=100$ already matched that.  We note that these
   observed errors are much lower than the worst-case guarantees
   provided in our analysis.  The explanation is the
skew of the influence distribution is skewed, where
   the node of maximum marginal influence is well separated from the
   second maximum, and with very few nodes having influence that is
   more than a fraction of the maximum.

  Our implementation allows us to examine the coverage to seed set size
  tradeoffs as a function of  the threshold $T$.  These tradeoffs
  provide structural insights on the networks and
results are shown in Figure~\ref{coverageT:fig}.  
 Higher values   of $T$ as expected have higher coverage with fewer
 seeds.  We can also see a highly skewed and asymmetric distribution
 of importance.  For example, the \instance{LiveJournal} graph with
 nearly four million nodes, there is a single node that 4$\times 10^4$
 other nodes rank within their top $T=100$.  The first 11 nodes have
 $1.6 \times 10^5$ nodes ranking at least one of them in their top
 $100$.  For $T=1000$, the top seed covers $3 \times 10^5$ nodes  and
 the top 12 cover $7.5\times 10^5$ (a quarter of all nodes).

Table~\ref{skimruntime:table} lists selected single thread running times for reverse-rank \skim.
Listed times do not include ADS computation (see Table
~\ref{ADStable}), but this preprocessing time was only a fraction of
the total.  We note that the running time did not significantly
depend on ADS size (the parameter $k$).  The parameter $k$ can impact
running time only because it can generate longer neighborhood estimate lists $L(i)$.  The
size of these lists, even with very large $k$,  is below the
effective diameter of the graph, which was small in our data sets. 
The listed times  in the table use 
$k=128$ for the four smaller graphs  and $k=64$
for the largest one.   They correspond to computing 
the full sequence (until all nodes are covered).  Note that the running
time can be significantly reduced if we stop when a desired coverage
or number of seeds are reached.  We can also observe that the running
time grows linearly with the sample size $K$.
An interesting observation is that for the largest graphs, the computation is faster for larger values of $T$ --  This is because \skim\ works with 
  the residual problem, and its size decreases more rapidly for higher 
  values of $T$.  This is in contrast to an exact greedy computation
  (Algorithm \ref{exactgreedyim:alg}), where the running time
  increases rapidly with $T$.  Selected running times, and the slowdown factor
  with respect to reverse-rank \skim\ (including ADS computation), are provided in
  Table~\ref{exactapproximatetime:table}.  We can see that for very
  small values of $T$ the exact computation is feasible but for larger
  values of $T$, the running time and the slowdown factor
  increase rapidly.
\begin{table}\caption{Reverse-rank \skim\ running times [Sec]\label{skimruntime:table}}
{\small
\begin{tabular}{l|rrr|rrr}
Instance & \multicolumn{3}{c}{$K=100$} & \multicolumn{3}{c}{$K=500$}
  \\
 $T$: & $10^2$ & $10^3$ & $10^4$ & $10^2$ & $10^3$ & $10^4$ \\
\hline
\instance{Slashdot} &   10.22 & 16.35 &  - &     17.37 &  32.81 & - \\
\instance{Twitter} &  95.4 & 68.6 & 53.4 &  114 & 98 & 82 \\
\instance{YouTube} &
151.3 &  138.6 &  228.9 &   289.3 &  256.6 & 346.9 \\
\instance{LiveJournal} &  6533 & 3517 &  2040    &
 8261  &  6123 & 4313   
\end{tabular}
}
\end{table}
\begin{table}\caption{Exact greedy running times and speedup factor of
    reverse-rank \skim\ (including sketch computation) \label{exactapproximatetime:table}}
\begin{center}
{\small
\begin{tabular}{ll|rl}
Instance & $T$ & [hours] & $\approx$ factor \\
\hline
\instance{YouTube} & $10$ &  1.50  & $\times$5 \\
\instance{YouTube} & $10^2$ &  7.26  & $\times$23 \\
\instance{YouTube} & $10^3$  &  20.74  & $\times$68\\
\instance{LiveJournal} & $10$  &  3.01  & $\times$1 \\
\instance{LiveJournal} & $10^2$  &  11.59  & $\times$4 
\end{tabular}
}
\end{center}
\end{table}

 Finally, we  evaluate the quality of our approximate greedy sequences
  which were optimized for $\widehat{\Inf}$, in terms of the exact
  influence objective $\Inf$.  
To do so, we used a variation of Algorithm \ref{exactgreedyim:alg} to compute
the exact influence of the sequence of seeds returned by reverse-rank  \skim.
We observed that even for ADS parameter $k=64$ and $k=128$, the $\Inf$
coverage of the approximate   greedy sequence for $\widehat{\Inf}$ was
consistently within 5\% of  the exact greedy sequence for $\Inf$, and
typically much closer.

\ignore{
{\bf TO BE DONE Semantics:}
\begin{itemize}
\item 
To what extent we have asymmetry in the network?

Show asymmetry of $\pi_{ij}$ and $\pi_{ji}$. 
In undirected networks:  show distribution of ratio of 
$\max\{\pi_{ij}/\pi_{ji}, \pi{ji},\pi{ij}\}$ for random nodes with 
representative degrees.  We need to sample few hundred nodes and 
perform full Dijkstra from each.  Note that random nodes are typically 
far (so $\pi_{ij}$ in expectation is $n/2$) 
do same for closer nodes, say within 2-hops.

Show \rN\ distributions for typical core and periphery nodes. 
For core choose a higher degree node.  For periphery, a low degree 
node. 

\item 
  Use data that has communities.  Take some top communities. 

  From each community choose ten random beacons, pretend they are 
  the only labeled members of the community and we are predicting the rest. 

  Look at how other nodes rank the beacons (reverse ranks from each 
  beacon, and aggregate these ranks). 

   Use min, average, harmonic mean aggregations.  Use distance, rank,
   reverse rank.  Do ROC curves. 

  Which is the strongest signal?   perhaps need to normalize to 
  number of community memberships of a node. 
\item 
??  Compare approximation quality of \rD\ rank oracle based on 
estimated upper bound on distance. 
\end{itemize}
}

\section{Conclusion}

  Rank-based measures were used for decades as an alternative to
  distance-based measures.   Here, we defined and motivated rank-based measures of
  centrality and influence, which we believe will become important
  tools in network mining and analysis.  We then presented novel highly scalable algorithms for
  fundamental rank-based computations on graphs, including a
Dijkstra-like approximate reverse rank  single-source algorithm which
faciliates reverse-rank influence computation and reverse-rank greedy
influence maximization. 
We complement our work with hardness results that indicate that
exact computation inherently scales poorly.

 A contribution we make that is of even broader interest is a novel multithreaded design for computing
all-distance sketches (ADS) which provided the fastest implementation
for computing these sketches on multi-core architectures. This design
is relevant to many other applications of distance sketches.


  Going forward,  we plan to extend our reverse-rank IM computation to
  general decay functions, design a multithreaded implementation,
and open source our implementations.  We also hope to use our newly
available tools to explore and understand the
relation between distance-based and rank-based influence.

\ignore{
 An interesting open question is efficiently sketching the set of
 (approximate) reverse ranks, in the sense of all-distances sketches
 \cite{ECohen6f,ECohenADS:TKDE2015}.  The sketches we seek have the form
 \eqref{botkADS} with reverse rank replacing the distance.  Such
 sketches would provide influence oracles, similar to
 \cite{DSGZ:nips2013,timedinfluence:2014} for distance-based influence
 and closeness similarity oracles, similar to \cite{CDFGGW:COSN2013}
 for distance and forward-rank based similarity.
All-distances sketches can be efficiently computed with respect to
shortest-paths distances and approximate ``forward'' ranks, elapsed time and most
recent time in streams, and even distances in a set of graphs
\cite{timedinfluence:2014}.  The key to efficiency in these cases is
efficient pruning of ``reverse'' searches, which does not extend to
the reverse rank case.
}


 {\small
\notinproc{
\section*{Acknowledgements}  The authors would like to thank Shiri
Chechik and Amos Fiat for discussions and pointers.
}

\bibliographystyle{plain}
\bibliography{cycle} 
 }

\newpage
\appendix 

\begin{table}[h]
 {\small 
\begin{center}
\caption{Computing $T$ nearest neighbors for {\em all} nodes \label{KNN:table}}
\begin{tabular}{lrrr}
 & $T=16$  & $T=64$  & $T=128$  \\ 
instance &  [Sec] &  [Sec] &  [Sec] \\
\hline 
\instance{Facebook} &  0.29 & 0.65 & 0.68 \\
 \instance{Slashdot} & 9.08 & 24.45 & 45.32 \\
\instance{Twitter} & 96.78 & 222.48 & 412.54 \\
\instance{YouTube} & 1,034.93 & 1,570.93 & 2,836.12 \\
\instance{LiveJournal} & 10,589.59 & 12,885.12 & 19,183.58\\
\hline 
\end{tabular}
\end{center}
 }
\end{table}

\end{document}